\definecolor{link_red}{rgb}{0.7,0,0}
\definecolor{cite_blue}{rgb}{0,0,0.97}
\definecolor{cadmium}{RGB}{255,97,3}
\newcommand{\leqs}{\leqslant}
\newcommand{\geqs}{\geqslant}
\newcommand{\defas}{\mathrel{\raise.095ex\hbox{$:$}\mkern-4.2mu=}}
\newcommand{\defasr}{\mathrel{=\mkern-4.2mu\raise.095ex\hbox{$:$}}}
\newcommand{\vep}{\varepsilon}
\newcommand{\mbi}[1]{\boldsymbol{\mathit{#1}}}
\newcommand{\bsym}[1]{\boldsymbol{#1}}
\newcommand{\vabs}[1]{\left\lvert #1 \right\rvert}
\newcommand{\vnorm}[1]{\left\lVert #1 \right\rVert}
\newcommand{\set}[1]{\{ #1 \}}
\newcommand{\vset}[1]{\left\{ #1 \right\}}
\newcommand{\mbb}[1]{\mathbb{#1}}
\newcommand{\mrm}[1]{\mathrm{#1}}
\newcommand{\msf}[1]{\mathsf{#1}}
\newcommand{\mscr}[1]{\mathscr{#1}}
\newcommand{\mcal}[1]{\mathcal{#1}}
\newcommand{\pref}[1]{(\ref{#1})}
\DeclareMathOperator{\supp}{supp}
\newcommand*{\da@rightarrow}{\mathchar"0\hexnumber@\symAMSa 4B }
\newcommand*{\da@leftarrow}{\mathchar"0\hexnumber@\symAMSa 4C }
\newcommand*{\xdashrightarrow}[2][]{%
  \mathrel{%
    \mathpalette{\da@xarrow{#1}{#2}{}\da@rightarrow{\,}{}}{}%
  }%
}
\newcommand{\xdashleftarrow}[2][]{%
  \mathrel{%
    \mathpalette{\da@xarrow{#1}{#2}\da@leftarrow{}{}{\,}}{}%
  }%
}
\newcommand*{\da@xarrow}[7]{%
  \sbox0{$\ifx#7\scriptstyle\scriptscriptstyle\else\scriptstyle\fi#5#1#6\m@th$}%
  \sbox2{$\ifx#7\scriptstyle\scriptscriptstyle\else\scriptstyle\fi#5#2#6\m@th$}%
  \sbox4{$#7\dabar@\m@th$}%
  \dimen@=\wd0 %
  \ifdim\wd2 >\dimen@
    \dimen@=\wd2 %
  \fi
  \count@=2 %
  \def\da@bars{\dabar@\dabar@}%
  \@whiledim\count@\wd4<\dimen@\do{%
    \advance\count@\@ne
    \expandafter\def\expandafter\da@bars\expandafter{%
      \da@bars
      \dabar@ 
    }%
  }%
  \mathrel{#3}%
  \mathrel{%
    \mathop{\da@bars}\limits
    \ifx\\#1\\%
    \else
      _{\copy0}%
    \fi
    \ifx\\#2\\%
    \else
      ^{\copy2}%
    \fi
  }%
  \mathrel{#4}%
}
\theoremstyle{plain}
\newtheorem{theorem}{Theorem}[section]
\newtheorem{corollary}[theorem]{Corollary}
\newtheorem{lemma}[theorem]{Lemma}
\newtheorem{proposition}[theorem]{Proposition}
\newtheorem{m_theorem}{Theorem}
\theoremstyle{definition}
\newtheorem{definition}[theorem]{Definition}
\DeclareMathOperator{\Poisson}{Poisn}
\begin{document}

\title{Modeling delay in genetic networks: From delay birth-death processes to delay stochastic differential equations}

\author{Chinmaya Gupta} \affiliation{Department of Mathematics, University of Houston, Houston, TX}
\author{Jos\'e Manuel L\'opez} \affiliation{Department of Mathematics, University of Houston, Houston, TX}
\author{Robert Azencott} \affiliation{Department of Mathematics, University of Houston, Houston, TX}
\author{Matthew R.\ Bennett} \affiliation{Department of Biochemistry \& Cell Biology, Rice University, Houston, TX}\affiliation{Institute of Biosciences \& Bioengineering, Rice University, Houston, TX}
\author{Kre\v{s}imir Josi\'{c}} \affiliation{Department of Mathematics, University of Houston, Houston, TX}\affiliation{Department of Biology \& Biochemistry, University of Houston, Houston, TX}
\author{William Ott} \affiliation{Department of Mathematics, University of Houston, Houston, TX}

\begin{abstract}
Delay is an important and ubiquitous aspect of many biochemical processes.  For example, delay plays a central role in the dynamics of genetic regulatory networks as it stems from the sequential assembly of first mRNA and then protein.  Genetic regulatory networks are therefore frequently modeled as stochastic birth-death processes with delay.  Here we examine the relationship between delay birth-death processes and their appropriate approximating delay chemical Langevin equations. We prove that the distance between these two descriptions, as measured by expectations of functionals of the processes, converges to zero with increasing system size. Further, we prove that the delay birth-death process converges to the thermodynamic limit as system size tends to infinity.  Our results hold for both fixed delay and distributed delay.  Simulations demonstrate that the delay chemical Langevin approximation is accurate even at moderate system sizes.  It captures dynamical features such as the spatial and temporal distributions of transition pathways in metastable systems, oscillatory behavior in negative feedback circuits, and cross-correlations between nodes in a network. Overall, these results provide a foundation for using delay stochastic differential equations to approximate the dynamics of birth-death processes with delay.
\end{abstract}

\maketitle

\section{Introduction}
\label{s:intro}

\noindent
Gene regulatory networks play a central role in cellular function by translating genotype into phenotype. By dynamically controlling gene expression, gene regulatory networks provide cells with a mechanism for responding to environmental challenges. Therefore, creating accurate mathematical models of gene regulation is a central goal of mathematical biology. 

Delay in protein production can significantly affect the dynamics of gene regulatory networks. For example, delay can induce oscillations in systems with negative feedback~\cite{Amir_Meshner_Beatus_Stavans_2010, Bratsun_Volfson_Tsimring_Hasty_2005, Goodwin_1965, Lewis_2003, Mather_Bennett_Hasty_Tsimring_2009, Monk_2003, Smolen_Baxter_Byrne_1999}, and has been implicated in the production of robust, tunable oscillations in synthetic gene circuits containing linked positive and negative feedback~\cite{Stricker_Cookson_Bennett_Mather_Tsimring_Hasty_2008, Tigges_Marquez-Lago_Stelling_Fussenegger_2009}.  Indeed, delayed negative feedback is thought to govern the dynamics of circadian oscillators~\cite{Smolen_Baxter_Byrne_2002, Sriram_Gopinathan_2004}, a hypothesis experimentally verified in mammalian cells~\cite{Ukai-Tadenuma_etal_2011}.

In genetic regulatory networks, noise and delay interact in subtle and complex ways.  Delay can affect the stochastic properties of gene expression and hence the phenotype of the cell~\cite{Bratsun_Volfson_Tsimring_Hasty_2005, Gronlund_Lotstedt_Elf_2010, Gronlund_Lotstedt_Elf_2011, Maithreye_Sarkar_Parnaik_Sinha_2008, Scott_2009}.  It is well known that noise can induce switching in bistable genetic circuits~\cite{XiaoWang_PNAS_2013, hong:2012, he:2011, Ozbudak2004, Kepler_Elston_2001, aurell:2002, warren:2005, balaban:2004, gardner:2000, nevozhay:2012}; the infusion of delay dramatically enhances the stability of such circuits~\cite{Gupta_Lopez_Ott_Josic_Bennett_2013} and can induce an analog of stochastic resonance~\cite{Fischer_Imkeller_2005, Fischer_Imkeller_2006}.  Variability in the delay time (distributed delay) can accelerate signaling in transcriptional signaling cascades~\cite{Josic_Lopez_Ott_Shiau_Bennett_2011}.

Given the importance of delay in gene regulatory networks, it is necessary to develop methods to simulate and analyze such systems across spatial scales.  In the absence of delay, it is well known that chemical reaction networks are accurately modeled by ordinary differential equations (ODEs) in the thermodynamic limit, {\itshape i.e.} when molecule numbers are sufficiently large. When molecule numbers are small, however, stochastic effects can dominate.  In this case, the chemical master equation (CME) describes the evolution of the probability density function over all states of the system. Gillespie's stochastic simulation algorithm (SSA)~\cite{Gillespie_1977} samples trajectories from the probability distribution described by the CME.

While exact, the CME is difficult to analyze and the SSA can be computationally expensive.  To address these issues, a hierarchy of coarse-grained approximations of the SSA has been developed~\cite{Higham_2008_survey} (see Figure \ref{f:hierarchy}). Spatially discrete approximations, such as  $\tau $-leaping~\cite{Cao_Gillespie_Petzold_2006, Cao_Gillespie_Petzold_2007, Gillespie_2001, Xu_Cai_2008} and $K$-leaping~\cite{Cai_Xu_2007} trade exactness for efficiency.   At the next level are chemical Langevin equations (CLEs), which are stochastic differential equations of dimension equal to the number of species in the biochemical system.  CLEs offer two advantages.  First, unlike the SSA, the well-developed ideas from random dynamical systems and stochastic differential equations apply to CLEs.  Second, it is straightforward to simulate large systems using CLEs.  Finally, in the thermodynamic limit, one arrives at the end of the Markovian hierarchy: the reaction rate equation (RRE).

\begin{figure}
\begin{center}
\hspace*{-0.25cm}
\begingroup%
  \makeatletter%
  \providecommand\color[2][]{%
    \errmessage{(Inkscape) Color is used for the text in Inkscape, but the package 'color.sty' is not loaded}%
    \renewcommand\color[2][]{}%
  }%
  \providecommand\transparent[1]{%
    \errmessage{(Inkscape) Transparency is used (non-zero) for the text in Inkscape, but the package 'transparent.sty' is not loaded}%
    \renewcommand\transparent[1]{}%
  }%
  \providecommand\rotatebox[2]{#2}%
  \ifx\svgwidth\undefined%
    \setlength{\unitlength}{256.31220703bp}%
    \ifx\svgscale\undefined%
      \relax%
    \else%
      \setlength{\unitlength}{\unitlength * \real{\svgscale}}%
    \fi%
  \else%
    \setlength{\unitlength}{\svgwidth}%
  \fi%
  \global\let\svgwidth\undefined%
  \global\let\svgscale\undefined%
  \makeatother%
  \begin{picture}(1,0.75770602)%
    \put(0,0){\includegraphics[width=\unitlength]{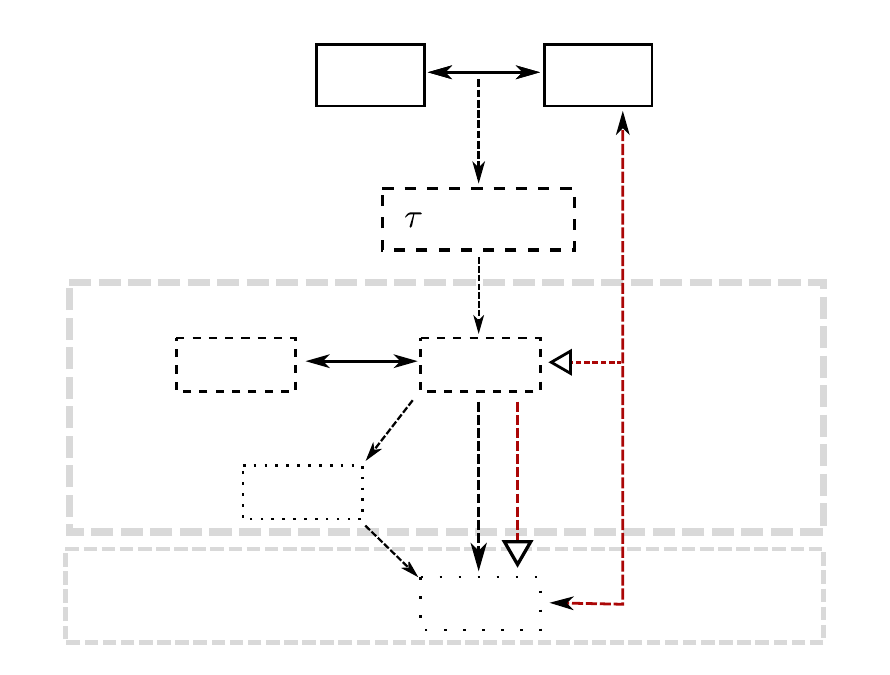}}%
    \put(0.63978037,0.66330034){\color[rgb]{0,0,0}\makebox(0,0)[lb]{\smash{SSA}}}%
    \put(0.63353796,0.08899866){\color[rgb]{0,0,0}\makebox(0,0)[lb]{\smash{\cite{Schlicht_Winkler_2008}}}}%
    \put(0.5960835,0.23257399){\color[rgb]{0,0,0}\makebox(0,0)[lb]{\smash{\cite{Brett_Galla_2013}}}}%
    \put(0.63353796,0.36366461){\color[rgb]{0,0,0}\makebox(0,0)[lb]{\smash{\cite{Brett_Galla_2013}}}}%
    \put(0.38384251,0.65966796){\color[rgb]{0,0,0}\makebox(0,0)[lb]{\smash{CME}}}%
    \put(0.47656487,0.50123945){\color[rgb]{0,0,0}\makebox(0,0)[lb]{\smash{-Leaping}}}%
    \put(0.50777693,0.33893741){\color[rgb]{0,0,0}\makebox(0,0)[lb]{\smash{CLE}}}%
    \put(0.2331119,0.33893741){\color[rgb]{0,0,0}\makebox(0,0)[lb]{\smash{FPE}}}%
    \put(0.30802054,0.19536251){\color[rgb]{0,0,0}\makebox(0,0)[lb]{\smash{LNA}}}%
    \put(0.50777688,0.07051479){\color[rgb]{0,0,0}\makebox(0,0)[lb]{\smash{RRE}}}%
    \put(0.63978037,0.66330034){\color[rgb]{0,0,0}\makebox(0,0)[lb]{\smash{SSA}}}%
    \put(0.78959824,0.33869497){\color[rgb]{0,0,0}\makebox(0,0)[lb]{\smash{$N \gg 1$}}}%
    \put(0.78959824,0.07651419){\color[rgb]{0,0,0}\makebox(0,0)[lb]{\smash{$N \to \infty$}}}%
  \end{picture}%
\endgroup%
\end{center}
\caption{Schematic of the modeling hierarchy for biochemical systems. The black arrows link the various components of the theory for Markov systems (no delay); red arrows link the corresponding delay components.  Numbers attached to arrows refer to papers that establish the corresponding links.  Empty arrowheads denote heuristic derivations; in this paper, we rigorously establish the dSSA to dCLE and dCLE to dRRE links.
}\label{f:hierarchy}
\end{figure}

\begin{figure}
\includegraphics[]{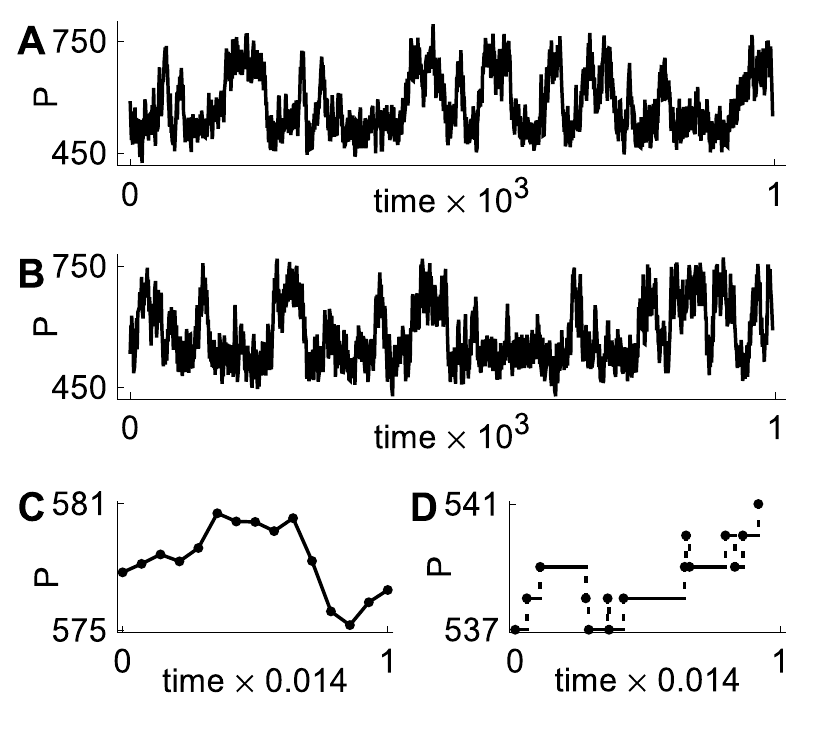}
\caption{(A) Typical trace obtained by using the dCLE.  (B) Corresponding trace obtained using  dSSA. By zooming into a small time segment, we see that the trace obtained from the dCLE (C) consists of equi-spaced time points (corresponding to an Euler discretization) and is continuous. The corresponding segment of the trace from the dSSA (D) consists of events that occur at random times and has jump discontinuities. The displayed traces were gathered after a long transient. The model simulated is the single-gene positive feedback model with $N = 500$; see Section \ref{sec:pf}.}\label{fig:intro}
\end{figure}

The Markovian hierarchy above (no delay) is well-understood~\cite{Gillespie_Hellander_Petzold_2013, Higham_2008_survey}, but a complete analogue of the Markovian theory does not yet exist for systems with delay.  The SSA has been generalized to a delay version - the dSSA - to allow for both fixed \cite{Barrio_Burrage_Leier_Tian_2006, Bratsun_Volfson_Tsimring_Hasty_2005}
and variable \cite{Josic_Lopez_Ott_Shiau_Bennett_2011, Schlicht_Winkler_2008} delay.  Some analogues of $\tau$-leaping exist for systems with delay; see {\itshape e.g.} $D$-leaping \cite{bayati2009d}.

Several methods have been used to formally derive a delay chemical Langevin equation (dCLE) from the delay chemical master equation (dCME); see Section~\ref{sec:discuss} for details.  Brett and Galla~\cite{Brett_Galla_2013} use the path integral formalism of Martin, Siggia, Rose, Janssen, and de Dominicis to derive a dCLE approximation without relying on a master equation.  The Brett and Galla derivation produces the `correct' dCLE approximation of the underlying delay birth-death (dBD) process in the sense that the first and second moments of the dCLE match those of the dBD process.  However, their derivation has some limitations (see Section~\ref{sec:discuss}).  In particular, it gives no rigorous quantitative information about the distance between the dBD process and the dCLE.

In this paper, we establish a rigorous link between dBD processes and dCLEs by proving that the distance between the dBD process and the correct approximating dCLE process converges to zero as system size tends to infinity (as measured by expectations of functionals of the processes).  In particular, this result applies to all moments.  It is natural to express distance in terms of expectations of functionals because the dBD process is spatially discrete while the correct dCLE produces continuous trajectories (see Figure~\ref{fig:intro}).  Further, we prove that both processes converge weakly to the thermodynamic limit.  Finally, we quantitatively estimate the distance between the dBD process and the correct dCLE approximation as well as the distance of each of these to the thermodynamic limit.  All of these results hold for both fixed delay and distributed delay (see Figure~\ref{fig:schematic}A). 

The correct dCLE approximation is distinguished within the class of Gaussian approximations of the dBD process by the fact that it matches both the first and second moments of the dBD process.  As we will see, it performs remarkably well at moderate system sizes in a number of dynamical settings: steady state dynamics, oscillatory dynamics, and metastable switches.  We will demonstrate via simulation and argue mathematically using characteristic functions that no other Gaussian process with appropriately scaled noise performs as well.  In the following, the term `dCLE' shall refer specifically to the dCLE derived by Brett and Galla and expressed by~\eqref{e:dcle_general}, unless specifically stated otherwise.
We prove our mathematical results in the supplement~\cite{supplementary_material}.

\section{Simulations/Outline and interpretation of results} 
\label{sec:results}
  
Genetic regulatory networks may be simulated using an exact dSSA to account for 
transcriptional delay~\cite{Bratsun_Volfson_Tsimring_Hasty_2005, Barrio_Burrage_Leier_Tian_2006, Josic_Lopez_Ott_Shiau_Bennett_2011, Schlicht_Winkler_2008}.  Here we provide a heuristic derivation of a related
 dCLE, and show that in a number of concrete examples it provides an excellent approximation of the system (see Figure~\ref{fig:schematic}).  These simulations raise the following questions: Is the dCLE approximation valid in general? Can the expected quality of the approximation be quantified in general?  We answer these questions mathematically in Section~\ref{sec:main}.

We will adopt the following notation for reactions with delay,
\[
X + Y \xdashrightarrow[\mu]{\alpha(X, Y)} Z
\] 
Here $\alpha (X,Y)$ denotes the rate of the reaction, the dashed arrow indicates a reaction with delay, and $\mu $ is a probability measure that describes the delay distribution.  Solid arrows indicate reactions without delay.

\begin{figure}
\begin{center}
\includegraphics[scale=1.0]{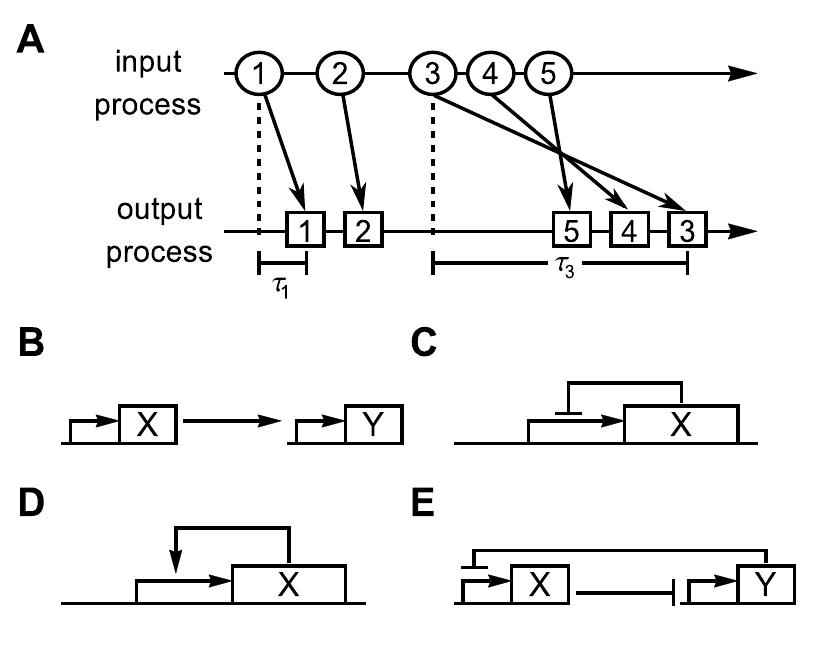}
\caption{(A) The effect of distributed delay on the protein production process. The ``input process'' is the first step in the transcription process, while the ``output process'' is the final mature product that enters the population.  The time delay $\tau $ accounts for the lag between the initialization of transcription and the production of mature product.  In a system with distributed delay, different production events can have different delay times; the order of the output process may therefore not match that of the input process.  (B--E) Simulated gene regulatory network motifs: a transcriptional  cascade  (B), oscillators  (C) and metastable systems (D--E).
}
\label{fig:schematic}
\end{center}
\end{figure}

\subsection{A transcriptional cascade}

First we consider a transcriptional cascade  with  two genes  that code for proteins $X$ and $Y$.  Protein $X$ is produced at a basal rate; production of $Y$ is induced by the presence of $X$.  The state of the system is represented by an ordered pair $(X,Y)$.  Note that we use $X$ and $Y$ to denote both protein names and protein numbers.  The reactions in the network, and the associated state change vectors $v_{i}$, are given by
\begin{align}
\emptyset \xdashrightarrow[\mu]{a} X  \quad\quad\quad &v_1 = (1,0)\\
X\xrightarrow{b_1 X} \emptyset  \quad\quad\quad &v_2 = (-1,0)\\
\emptyset \xdashrightarrow[\mu]{\psi (X)} Y  \quad\quad\quad &v_3 = (0,1)\label{reac:type3}\\
Y\xrightarrow{b_2 Y} \emptyset  \quad\quad\quad &v_4 = (0,-1)
\end{align}

This system can be simulated exactly using the dSSA: Suppose the state of the system and the reactions in the queue are known at time $t_0$ (the queued reactions can be thought of as the ``input process''; see Figure~\ref{fig:schematic}A), and that the delay kernel $\mu$ is supported on a finite interval $[0,\tau_0]$~\cite{dSSA_note}.

\begin{enumerate}[leftmargin=*, label=(\arabic*), ref=\arabic*]
\item Sample a waiting time $t_{w}$  from an exponential distribution with parameter $A_0 \defas a + b_1 X + \psi (X) + b_2 Y$.
\item If there is a reaction in the queue that is scheduled to exit at time $t_{q} < t_0 + t_{w}$,  advance to time $t_q$ and set $t_0 \rightarrow t_q$ and $(X, Y) \mapsto (X,Y) + (v_i^{(1)}, v_i^{(2)})$, where $v_i = (v_i^{(1)}, v_i^{(2)})$ is the change in the system due to the scheduled reaction.  Finally, sample a new waiting time for the next reaction.  
\item If no reaction exits before $t_0 + t_{w}$, set $t_0 \mapsto t_0 + t_{w}$ and sample a reaction type from the set $\left\{ 1,2,3,4\right\}$ with probabilities proportional to $\left\{ a, b_1 X, \psi (X), b_2 Y\right\}$, respectively.  If the reaction chosen is a non-delayed reaction, perform the update $(X, Y) \mapsto (X-1, Y)$ (or $(X, Y) \mapsto (X, Y-1)$). However, if the reaction chosen is a delayed reaction, the state change vector $(1,0)$ (or $(0,1)$) is put into the queue along with an exit time $t_q$. The difference $\tau = t_q - t_0$ between the current  and the exit time is sampled from the delay distribution $\mu$.
\end{enumerate}

We now heuristically derive the dCLE for the feed-forward system from this spatially discrete process.

Suppose the delay kernel $\mu $ is given by a probability density function $\kappa $ supported on $[0,\tau_{0}]$ ($\mrm{d} \mu (s) = \kappa (s) \mrm{d} s$).  We first approximate the number of reactions that produce $Y$ (Eq.\ (\ref{reac:type3})) that will be completed within the interval $[t_{0}, t_{0} + \Delta ]$, where $t_{0}$ denotes the current time and $\Delta $ is a small increment.  Since the production of $Y$ involves delay, a reaction of this type that is completed within $[t_{0}, t_{0} + \Delta ]$ must have been initiated at some time within $[t_{0} - \tau_{0}, t_{0}]$.  Let $t_{0} > t_{0} - \Delta > t_{0} - 2 \Delta > \cdots $ be the partition of $[t_{0} - \tau_{0}, t_{0}]$ into intervals of length $\Delta $.  The (random) number of reactions completed within $[t_{0}, t_{0} + \Delta ]$ {\itshape and} initiated within $[t_{0} - (i+1) \Delta , t_{0} - i \Delta ]$ may be approximated by a Poisson random variable with mean
\begin{equation*}
\psi (X_{t_{0} - (i+1) \Delta}) \Delta \cdot \kappa ((i+1) \Delta ) \Delta .
\end{equation*}
Summing over $i$, the (random) number of reactions completed within $[t_{0}, t_{0} + \Delta ]$ may be approximated by a Poisson random variable with mean
\begin{equation*}
\Delta \sum_{i} \left[ \psi (X_{t_{0} - (i+1) \Delta}) (\kappa ((i+1) \Delta ) \Delta ) \right];
\end{equation*}
this is a Riemann sum that approximates the integral
\begin{equation*}
\Delta \int_{0}^{\tau_{0}} \psi (X_{t_{0} - s}) \kappa (s) \, \mrm{d} s.
\end{equation*}
Known as $\tau $-leaping, this line of reasoning produces a Poissonian approximation of the dBD process:
\begin{align*}
\delta X_t &= \Poisson (a\Delta) - \Poisson(b_1 X_t \Delta )\\
\delta Y_t &= \Poisson \left(\Delta\int_{0}^{\tau_0} \psi (X_{t -s}) \, \mrm{d} \mu(s)\right) - \Poisson(b_2 Y_t \Delta ).
\end{align*}
Here $\Poisson (\eta )$ denotes a Poisson random variable with mean $\eta $.

\begin{figure}
\begin{center}
\includegraphics[scale=1.0]{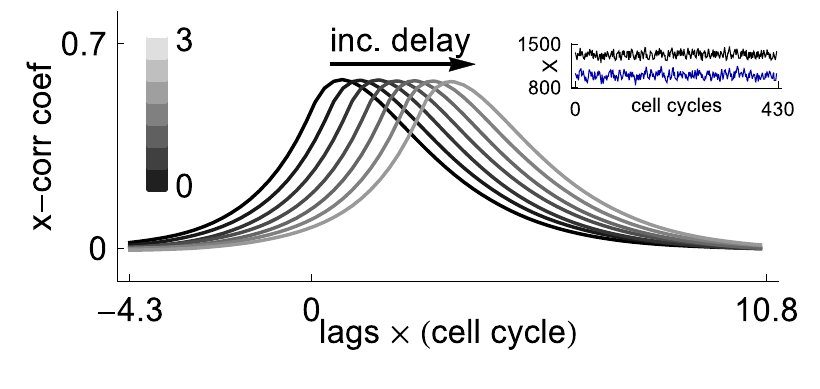}
\caption{Cross correlation functions for the two-species feed-forward architecture at system size $N = 1000$. The inset shows sample trajectories for $X$ (black; top) and $Y$ (blue; bottom). The mean field model has a fixed point, and for this reason, stochastic dynamics stay within a neighborhood of this fixed point (see Theorem \ref{thm:tube}). However, the effect of increasing delay is clearly seen in the cross correlation function. The color bar displays the delay size corresponding to the cross correlation curves.  Parameter values are given by $\tilde{\psi}(X) = \tilde{a}_1 x^3/(m^3+x^3)$, $\tilde{a} = 0.92$, $\tilde{a}_1 = 1.39$, $b_1 = b_2 = \ln(2)$, $m = 1.33$, $\mu = \delta_{\tau}$.  The cross correlations have been normalized by dividing by the standard deviations $\sigma_{X}$ and $\sigma_{Y}$ of $X$ and $Y$.  Time has been normalized to cell cycle length.}
\label{fig:ff_cross}
\end{center}
\end{figure}

If these Poisson random variables have large mean, they can be approximated by normal random variables. For example, the Poisson variable representing the number of reactions that produce $Y$ can be approximated by a normal random variable with mean and variance equal to $\Delta\int_{0}^{\tau_0} \psi (X_{t -s}) \, \mrm{d} \mu(s)$. Since each reaction changes the state of either $X$ or $Y$ (but never both), it follows that the evolution of the system can be approximated by the stochastic difference equation
\begin{subequations}
\label{e:s-difference-1}
\begin{align}
\delta X_t &= \Delta  (a-b_1 X_t)+ \sqrt{\Delta (a + b_1 X_t)} \mscr{N} (0,1) \\
\delta Y_t &=  \Delta\left(\int_{0}^{\tau_0} \psi (X_{t - s}) \, \mrm{d} \mu(s) - b_2 Y_t\right) \\
 &\quad {}+ \sqrt{\Delta\left(\int_{0}^{\tau_0} \psi (X_{t - s}) \, \mrm{d} \mu(s) + b_2 Y_t\right)} \mscr{N} (0,1),
\notag
\end{align} 
\end{subequations}
where $\mscr{N} (0,1)$ is the standard normal random variable.

System~\eqref{e:s-difference-1} may be written in terms of concentrations.  Let $N$ be a system size parameter.  We think of $N$ as a characteristic protein number; $X_{t} / N$ and $Y_{t} / N$ therefore represent fractions of this characteristic value.  Writing $x_{t} = X_{t} / N$, $y_{t} = Y_{t} / N$, $\tilde{\psi} (x) = \psi (Nx) / N$, and assuming that the basal production rate $a$ scales with $N$ as $a = \tilde{a} N$, we obtain
\begin{subequations}
\label{e:s-difference-2}
\begin{align} 
\delta x_t &= \Delta  (\tilde a- b_1 x_t) + \frac{1}{\sqrt{N}}\sqrt{\Delta (\tilde a + b_1 x_t)} \mscr{N} (0,1) \\
\delta y_t & =  \Delta\left(\int_{0}^{\tau_0} \tilde{\psi} (x_{t - s}) \, \mrm{d} \mu(s) - b_2 y_t\right) \\
 &\quad {}+ \frac{1}{\sqrt{N}}\sqrt{\Delta\left(\int_{0}^{\tau_0} \tilde{\psi} (x_{t - s}) \, \mrm{d} \mu(s) + b_2 y_t\right)} \mscr{N} (0,1).
\notag
\end{align}
\end{subequations}

Eq.~\eqref{e:s-difference-2} is the Euler--Maruyama type discretization of a delay stochastic differential equation.  Replacing $\Delta $ with $\mrm{d} t$ and $\sqrt{\Delta} \mscr{N} (0,1)$ with $\mrm{d} W_{t}$ in~\eqref{e:s-difference-2}, we obtain
\begin{subequations}
\label{e:ff-sde}
\begin{align} 
\mrm{d} x_t &= (\tilde a- b_1 x_t) \, \mrm{d} t + \frac{1}{\sqrt{N}}\sqrt{(\tilde a + b_1 x_t)} \, \mrm{d} W^{1}_{t} \\
\mrm{d} y_t & = \left(\int_{0}^{\tau_0} \tilde{\psi} (x_{t - s}) \, \mrm{d} \mu(s) - b_2 y_t \right) \mrm{d} t \\
 &\quad {}+ \frac{1}{\sqrt{N}}\sqrt{\left(\int_{0}^{\tau_0} \tilde{\psi} (x_{t - s}) \, \mrm{d} \mu(s) + b_2 y_t\right)} \, \mrm{d} W^{2}_{t}.
\notag
\end{align}
\end{subequations}
This is the dCLE for the transcriptional cascade in this section.

Taking the formal  thermodynamic limit, $N \to \infty $, in Eq.~\eqref{e:ff-sde} yields the reaction rate equations derived in~\cite{Schlicht_Winkler_2008}:
\begin{subequations}
\label{e:ff-dde}
\begin{align}
\mathrm{d} x_{t} &=  \left(\tilde a  - b_{1} x_{t} \, \right)\mathrm{d} t \\
\mathrm{d} y_{t} &=\left( \int_{0}^{\tau_{0}} \tilde{\psi} (x_{t -s}) \, \mathrm{d} \mu (s) - b_{2} y_{t} \right) \mathrm{d} t.
\end{align}
\end{subequations}
The dynamics described by Eq.~\eqref{e:ff-dde} are quite simple; if $b_{1} > 0$ and $b_{2} > 0$, then~\eqref{e:ff-dde} has a globally attracting stable stationary point. 

To test the validity of the dCLE approximation~\eqref{e:ff-sde}, we examine if it captures the interaction between the two proteins in our transcriptional cascade network.  Figure~\ref{fig:ff_cross} shows the cross correlation functions obtained by simulating the system with $N = 1000$ using dSSA.  From left to right, the curves correspond to fixed delay $\tau $ increasing from $0$ to $3$.  The corresponding cross correlation curves for the dCLE approximation~\eqref{e:ff-sde} are indistinguishable from those obtained using dSSA.

In the heuristic derivation above, we first fix $N$ and let $\Delta \to 0$ to obtain the dCLE; we then separately let $N \to \infty $ to obtain the thermodynamic limit.  Brett and Galla~\cite{Brett_Galla_2013} also derive the dCLE by first fixing $N$ and then sending $\Delta \to 0$.  However, the two limits, $\Delta \to 0$ and $N \to \infty $, cannot be taken independently; this is a common problem with heuristic derivations of stochastic differential equations, even in the absence of delay~\cite{ethier1986markov}. The time discretization, $\Delta,$ can be thought of as a sampling frequency, while the system size, $N,$ determines the rate at which reactions fire. If $N$ becomes too large for a given $\Delta$, then the number of reactions that fire within $[t,t + \Delta ]$ no longer follows a Poisson distribution with mean dependent only on the state of the system at time $t$. On the other hand, if $N$ is too small for a given $\Delta $, then the Poisson distribution cannot be approximated by a normal distribution.  In order to rigorously derive the Langevin approximation  and estimate the distance between the dBD and dCLE processes, we will have to take a careful limit by relating $\Delta$ to $N$ (with $\Delta \to 0$ as $N \to \infty $).  We describe the proper scaling in Section~\ref{sec:main}.

Applied to the transcriptional cascade, Theorem~\ref{thm:dssa_dcle} asserts that provided $\Delta $ scales correctly with $N$, the distance between the dBD process and the process described by Eq.~\eqref{e:ff-sde} converges to zero as $N \to \infty $ (as measured by expectations of functionals of the processes).  Theorem~\ref{thm:dssa_therm} asserts that the dBD process then converges weakly to the thermodynamic limit given by Eq.~\eqref{e:ff-dde} as $N \to \infty $.  Moreover, when $\Delta $ is correctly scaled with respect to $N$, Theorem~\ref{thm:tube} provides explicit bounds for the probabilities that the dBD and dCLE processes deviate from a narrow tube around the solution of Eq.~\eqref{e:ff-dde}.

In the previous example, the deterministic system has a fixed point. The time series for the stochastic system, therefore, stay within a small neighborhood of this fixed point (see inset, Fig.~\ref{fig:ff_cross}). In the next example, we show that the dCLE approximation remains excellent even when the deterministic dynamics are non-trivial. We consider a degrade-and-fire oscillator for which the deterministic system has a limit cycle. The dCLE correctly captures the peak height and the inter-peak times for the dSSA realization of the degrade and fire oscillator, in addition to statistics such as the mean and variance. The approximation does not break down at small instantaneous protein numbers. Indeed, the mathematical theory developed in this work makes an important point: protein concentrations at any particular time do not limit the quality of the dCLE approximation (in the presence of delay, or otherwise).  Instead, the quality of the dCLE approximation depends on the latent parameter $N$. Theorem~\ref{thm:tube} makes this more precise: if one fixes the allowable error $\varepsilon$ in the approximation of the dBD process by the dCLE process, then the time $T$ during which the approximation error stays smaller than $\varepsilon$ increases with $N$.

\begin{figure}
\begin{center}
\includegraphics[scale=1.0]{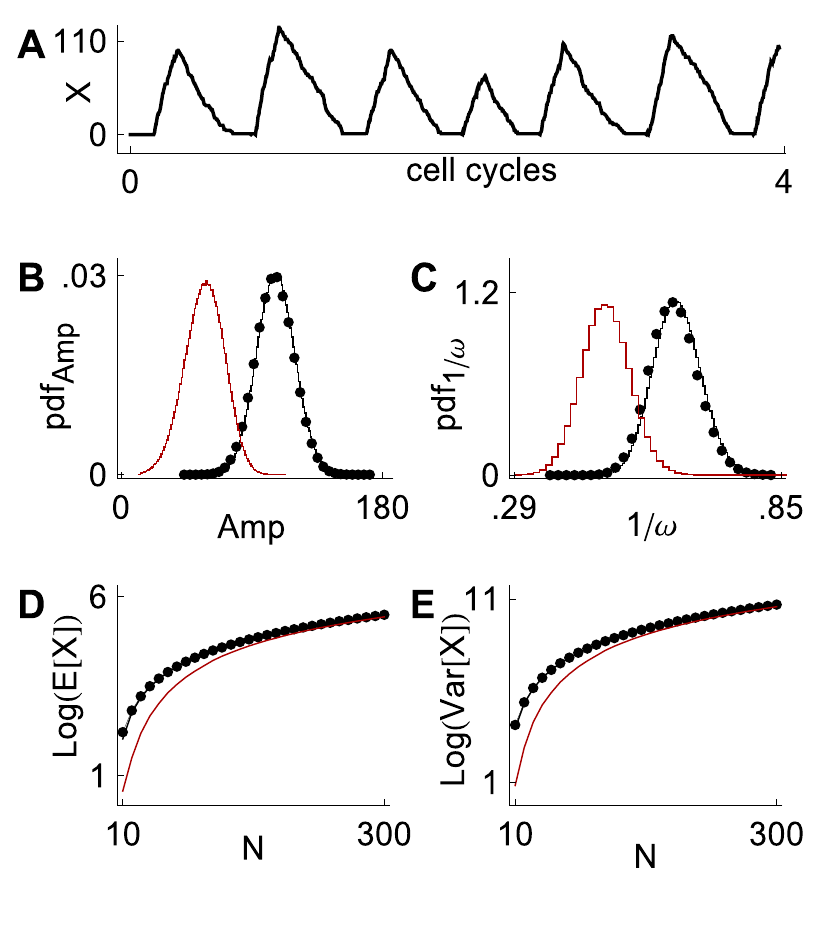}
\caption{Comparison of dSSA results to dSDE approximations for the degrade and fire oscillator.  (A) depicts a stochastic realization of the oscillator generated by dSSA.  We compare dSSA statistics (black dots in (B)--(E)) to those generated by the dCLE approximation given by Eq.~\eqref{e:df_dcle_correct} (black curves).  We also show results for the dSDE approximation given by Eq.~\eqref{e:df_dcle_incorrect}, obtained by removing delay from the diffusion term in Eq.~\eqref{e:df_dcle_correct} (red curves).  At system size $N = 50$, the dCLE approximation given by Eq.~\eqref{e:df_dcle_correct} closely matches dSSA with respect to spike height distribution (B) and interspike interval distribution (C).  In contrast, removing delay from the diffusion term results in a poor approximation of these distributions, as shown by the sizable shifts affecting the red curves.  (D) and (E) illustrate mean repressor protein level and repressor protein variance, respectively, as functions of system size $N$.  Eq.~\eqref{e:df_dcle_correct} provides a good approximation for all simulated values of $N$ while the performance of Eq.~\eqref{e:df_dcle_incorrect} improves as $N$ increases.  The quantity $P$ represents protein number, not protein concentration.  Parameter values are $\alpha = 20.8$, $C_1 = 0.04$, $\beta = \ln(2)$, $V_{max} = 5.55$, $\gamma_0 = 0.01$, $\mu = \delta_{0.14}$.  A soft boundary was added at $0$ to ensure positivity.}
\label{fig:df_oscillator}
\end{center}
\end{figure}

\subsection{Degrade and fire oscillator}\label{sec:dnf}

The degrade and fire oscillator depicted schematically in Figure~\ref{fig:schematic}C consists of a single autorepressive gene and corresponds to the reaction network 
\begin{align*}
&\emptyset \xdashrightarrow[\mu]{\psi (X)} X \xrightarrow{\gamma X} \emptyset\\
&X + E \xrightarrow{\eta (X)} E
\end{align*}
The production rate $\psi (X)$ is given by $\psi (X) = N f (X/N)$, where $f$ is the propensity function 
\[
f(x) = \frac{\alpha }{1 + \left( x / C_{1} \right)^{4}};
\]
the enzymatic degradation rate $\eta (X)$ is given by $\eta (X) = N g(X/N)$. Here
$g(x) = V_{max} x / (K + x)$;
$K$ is the Michaelis-Menten constant, $V_{max}$ the maximal enzymatic degradation rate, and $\gamma$ the dilution rate coefficient.  In the thermodynamic limit, the system is modeled by the delay differential equation
\begin{equation}
\label{e:df_dde}
\frac{ \mrm{d} x }{ \mrm{d} t } = \int_{0}^{\tau_{0} } \frac{ \alpha }{ 1 + \left( \frac{ x(t-s) }{ C_{1} } \right)^{4} } \, \mrm{d} \mu (s) - \gamma x(t) - \frac{ V_{max} x(t) }{ K + x(t) }.
\end{equation}
As before, $x(t)$ denotes the concentration of  protein $X$.  We model the formation of functional repressor protein using distributed delay (described by the probability measure $\mu $); this delayed negative feedback can induce oscillations~\cite{Mather_Bennett_Hasty_Tsimring_2009}.  Figure~\ref{fig:df_oscillator}A depicts a sample realization of the stochastic version of the degrade and fire oscillator (the finite system size regime) generated by dSSA.

The dCLE approximation is in this case given by
\begin{equation}
\label{e:df_dcle_correct}
\begin{aligned}
\mrm{d} &x_{t} = \int_{0}^{\tau_{0} } f(x_{t-s}) \, \mrm{d} \mu (s) - \gamma x_{t} - g(x_{t}) \, \mrm{d} t \\
&{}+ \frac{1}{\sqrt{N}} \left[ \int_{0}^{\tau_{0} } f(x_{t-s}) \, \mrm{d} \mu (s) + \gamma x_{t} + g(x_{t}) \right]^{\frac{1}{2}} \mrm{d} W_{t}.
\end{aligned} 
\end{equation}
Figure~\ref{fig:df_oscillator} illustrates that Eq.~\eqref{e:df_dcle_correct} provides a good approximation of the dBD dynamics, even when system size is relatively small.  At system size $N = 50$, the spike height distribution and interspike interval distribution obtained using the dSSA (black dots in Figure~\ref{fig:df_oscillator}B--\ref{fig:df_oscillator}C) are nearly indistinguishable from those obtained using Eq.~\eqref{e:df_dcle_correct} (black curves in Figure~\ref{fig:df_oscillator}B--\ref{fig:df_oscillator}C).  Further, we see a close match with respect to mean repressor protein level and repressor protein variance across a range of system sizes (Figure~\ref{fig:df_oscillator}D--\ref{fig:df_oscillator}E).

Interestingly, the dCLE approximation is very good even though the protein number approaches zero during part of the oscillation.  This illustrates a central feature of the theory: the quality of the dCLE approximation is a function of a latent parameter $N$, not of the number of molecules present at any given time. 

The exact form of the diffusion term is crucial to the accuracy of the dCLE approximation.  If we remove delay from the diffusion term in Eq.~\eqref{e:df_dcle_correct}, we obtain
\begin{equation} 
\label{e:df_dcle_incorrect}
\begin{aligned}
\mrm{d} x_{t} &= \int_{0}^{\tau_{0} } f(x_{t-s}) \, \mrm{d} \mu (s) - \gamma x_{t} - g(x_{t}) \, \mrm{d} t \\
&\quad {}+ \frac{1}{\sqrt{N}} \left[ f(x_{t}) + \gamma x_{t} + g(x_{t}) \right]^{1/2} \mrm{d} W_{t}.
\end{aligned} 
\end{equation}
At system size $N = 50$ (red curves in Figure~\ref{fig:df_oscillator}B--\ref{fig:df_oscillator}C), dSDE~\eqref{e:df_dcle_incorrect} produces dramatically different results from those generated by the correct dCLE approximation.  The performance of Eq.~\eqref{e:df_dcle_incorrect} improves as $N$ increases (Figure~\ref{fig:df_oscillator}D--\ref{fig:df_oscillator}E).  This is expected, as both Eq.~\eqref{e:df_dcle_correct} and Eq.~\eqref{e:df_dcle_incorrect}
converge weakly to Eq.~\eqref{e:df_dde} as $N \to \infty$.

\subsection{Metastable systems}

Understanding metastability in stochastic systems is of fundamental importance in the study of biological switches~\cite{XiaoWang_PNAS_2013, hong:2012, he:2011, Ozbudak2004, Kepler_Elston_2001, aurell:2002, warren:2005, balaban:2004, gardner:2000, nevozhay:2012}.  While metastability is well understood mathematically in the absence of delay, understanding the impact of delay on metastability remains a major theoretical and computational challenge~\cite{Gupta_Lopez_Ott_Josic_Bennett_2013, Masoller_2003, Tsimring_Pikovsky_2001}.  We examine two
canonical examples to show that the dCLE can be used to study the impact of delay on metastability: a positive feedback circuit and a co-repressive genetic toggle switch.

\subsubsection{Single species positive feedback circuit}\label{sec:pf}

\begin{figure}[t]
\begin{center}
\includegraphics[scale=1.0]{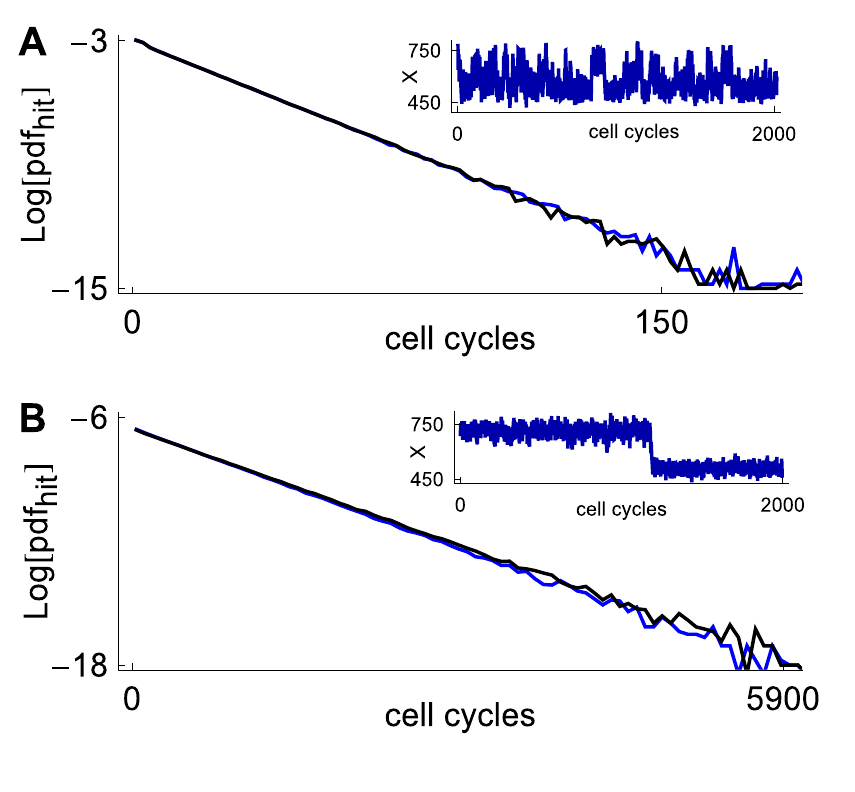}
\caption{Hitting time distributions for the positive feedback circuit.  Black curves represent dSSA data; blue curves represent data from the dCLE approximation.  The top panel corresponds to the Markov case $\tau = 0$ and a Hill coefficient of $b = 15$. The bottom row corresponds to a delay $\tau = 0.75$ and $b = 25$. The tail of the hitting times distribution becomes longer with both increasing delay and increasing Hill coefficient $b$. The dCLE captures the lengthening of the tail due to both effects.
Parameter values are $\alpha = 0.35$, $\beta = 0.15$, $c = 0.615$, $\gamma = \ln(2)$, $\mu = \delta_{\tau}$, $N = 1000$.}
\label{fig:pf_hitting}
\end{center}
\end{figure}

The simplest metastable system consists of a single protein that drives its own production (Figure~\ref{fig:schematic}D).  The chemical reaction network is given by
\[
\emptyset \xdashrightarrow[\mu]{\psi (X)} X \xrightarrow{\gamma X} \emptyset
\] 
with $\psi (X) = N f(X/N)$ for the propensity 
\begin{equation*}
f(x) = \alpha + \frac{\beta x^b }{c^b + x^b}.
\end{equation*}  
In the thermodynamic limit,  the dynamics of this model are described by the DDE
\begin{equation}
\label{e:pf_dde}
\mathrm{d} x_{t} = \int_{0}^{\tau_{0}} \alpha + \beta \frac{x(t - s)^{b}}{c^{b} + x(t-s)^{b}} \, \mathrm{d} \mu (s) - \gamma x(t) \, \mathrm{d} t.
\end{equation}
Here $x$ represents protein concentration and $b$ is the Hill coefficient.  In the thermodynamic limit, there are two stable stationary states, $x_{l}$ and $x_{h}$, as well as an unstable stationary state $x_{s}$.  These states satisfy $x_{l} < x_{s} < x_{h}$.

In the stochastic (finite $N$) regime, the stationary states $x_{l}$ and $x_{h}$ become metastable.  We simulate the metastable dynamics using dSSA and the dCLE approximation~\eqref{e:dcle_general} given in this case by
\begin{equation}
\label{e:pf_dcle}
\begin{aligned}
\mathrm{d} x_{t} &= \int_{0}^{\tau_{0}} f(x_{t-s}) \, \mathrm{d} \mu (s) - \gamma x_{t} \, \mathrm{d} t \\
&\quad {}+ \frac{1}{\sqrt{N}} \left[ \int_{0}^{\tau_{0}} f(x_{t-s}) \, \mathrm{d} \mu (s) + \gamma x_{t} \right]^{\frac{1}{2}} \mathrm{d} W_{t}.
\end{aligned}
\end{equation}
Figure~\ref{fig:pf_hitting} displays hitting time distributions for the dSSA simulations (black curves) and Eq.~\eqref{e:pf_dcle} (blue curves).  A hitting time is defined as follows:  We choose neighborhoods $(x_{l} - \delta_{l}, x_{l} + \delta_{l})$ and $(x_{s} - \delta_{s}, x_{s} + \delta_{s})$ of $x_{l}$ and $x_{s}$, respectively.  We start the clock when a trajectory enters $(x_{l} - \delta_{l}, x_{l} + \delta_{l})$ from the right.  The clock is stopped when that trajectory first enters $(x_{s} - \delta_{s}, x_{s} + \delta_{s})$.  A hitting time is the amount of time that elapses from clock start to clock stop. 
We see that for no delay (Figure~\ref{fig:pf_hitting}, top) and fixed delay $\tau = 1$ (bottom), the dCLE approximation accurately captures the hitting time distributions for Hill coefficients increasing from $15$ to $25$.   
Hence, the dCLE approximation accurately captures the rare events associated with a spatially-discrete delay stochastic process. This is significant because dSDEs are more amenable to large deviations theoretical analysis than their spatially-discrete counterparts.

Hitting times increase dramatically as the delay increases from $0$ to $1$, in accord with earlier analysis~\cite{Gupta_Lopez_Ott_Josic_Bennett_2013}.  A dramatic increase is also seen as the Hill coefficient increases.  This is due to the fact that the potential wells around $x_{l}$ and $x_{h}$ deepen as $b$ increases.

\subsubsection{Co-repressive toggle switch}

\begin{figure}[ht]
\begin{center}
\includegraphics[scale=0.75]{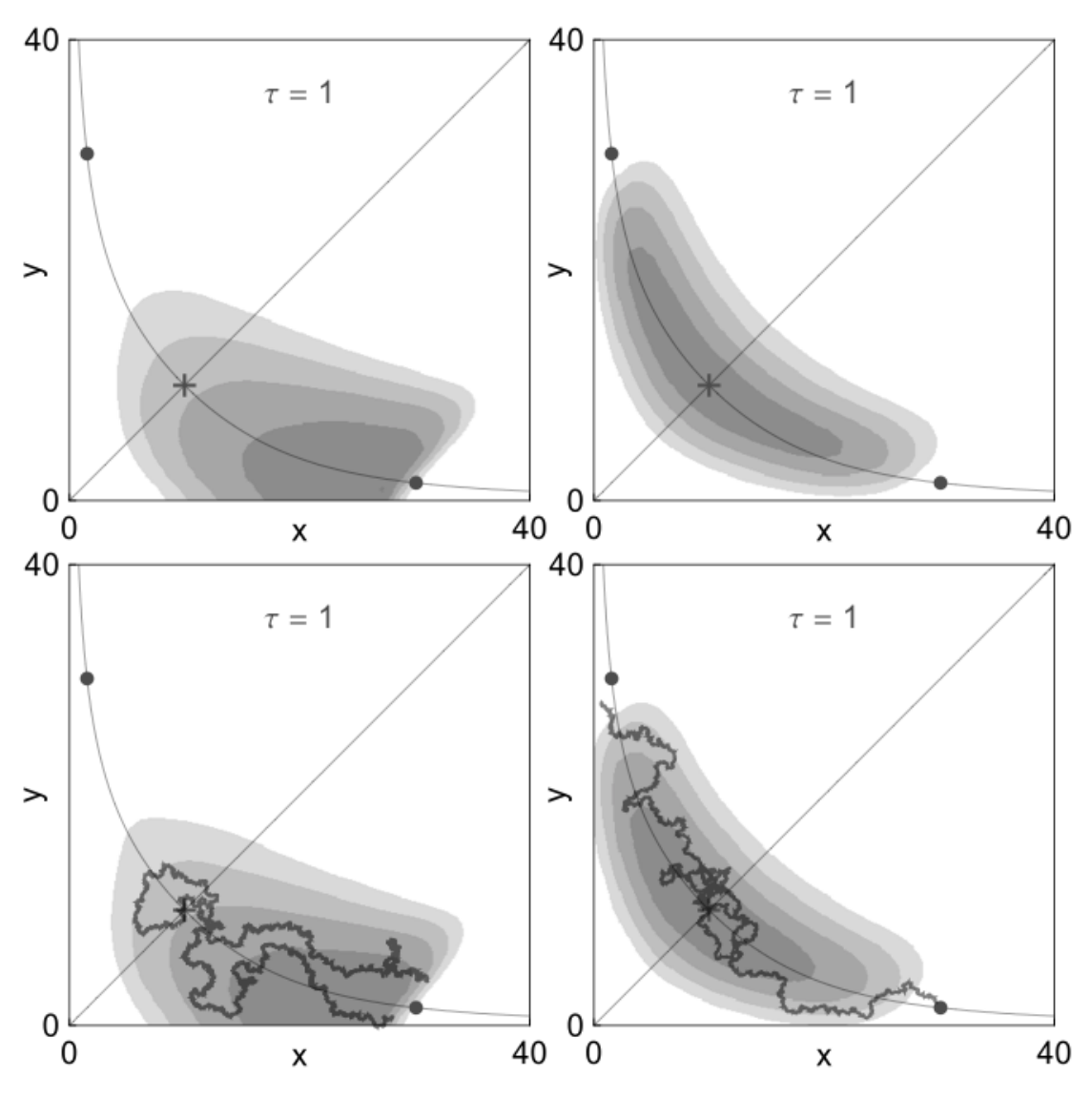}
\caption{Density plots for the co-repressive toggle switch.  The two left panels illustrate trajectories that leave a neighborhood of the stable point $(x_{h},y_{l})$ and fall back into the same neighborhood before transitioning into a neighborhood of the stable point $(x_{l},y_{h})$.  The two right panels illustrate trajectories that leave a neighborhood of $(x_{h},y_{l})$ and transition to a neighborhood of $(x_{l},y_{h})$ before falling back into the first neighborhood. Top panels correspond to dSSA; bottom panels correspond to dCLE. Cartoons of typical trajectories corresponding to failed transitions (left) and successful transitions (right) are shown for the dCLE process.  Plots are shown for $N = 30$. The values of the other parameters are $\beta = 0.73$, $k = 0.05$, $\gamma = \ln(2)$.}
\label{fig:toggle}
\end{center}
\end{figure}

The co-repressive toggle switch (Figure~\ref{fig:schematic}E) is a two-dimensional metastable system described in the thermodynamic limit by the DDEs
\begin{subequations}
\label{e:toggle_dde}
\begin{align}
\mathrm{d} x_{t} &= \int_{0}^{\tau_{0}} \frac{\beta }{1 + y(t-s)^{2}/k} \, \mathrm{d} \mu_{1} (s) - \gamma x \, \mathrm{d} t \\
\mathrm{d} y_{t} &= \int_{0}^{\tau_{0}} \frac{\beta }{1 + x(t-s)^{2}/k} \, \mathrm{d} \mu_{1} (s) - \gamma y \, \mathrm{d} t.
\end{align}
\end{subequations}
The measure $\mu_{1}$ describes the delay associated with production in this symmetric circuit.  Eq.~\eqref{e:toggle_dde} has two stable stationary points $(x_{l}, y_{h})$ and $(x_{h}, y_{l})$ separated by the unstable manifold associated with a saddle equilibrium point $(x_{s}, y_{s})$. In the stochastic (finite system size) regime, the stable stationary points become metastable.  In this regime a typical trajectory  spends most of its time near the metastable points, occasionally moving between them.

Figure~\ref{fig:toggle} displays density plots corresponding to trajectories that either successfully transition between metastable states (four panels on the right) or make failed transition attempts (four panels on the left).  Even for the moderate system size $N = 30$, the density plots generated by dSSA (top four panels) closely match those generated by the dCLE approximation in Eq.~\eqref{e:dcle_general} (bottom four panels).

Given the importance of rare events throughout stochastic dynamics, it is encouraging that the dCLE approximation  captures their statistics well.

\section{Main Results}\label{sec:main}

The simulations thus far described suggest that the dCLE closely approximates the dBD process provided that $\Delta $ scales properly with $N$.  We next provide mathematical statements that make this observation precise.  We prove that the distance between the dBD process and the approximating dCLE (as measured by expectations of functionals of the processes) converges to zero as the system size $N \to \infty $ (Theorem~\ref{thm:dssa_dcle}).  In particular, Theorem~\ref{thm:dssa_dcle} implies that the dCLE may be used to approximate all moments of the dBD process.  Further, we then prove that the dBD and dCLE processes both converge weakly to the thermodynamic limit (Theorem~\ref{thm:dssa_therm}).  Theorem~\ref{thm:dssa_therm} strengthens a result of Schlicht and Winkler~\cite{Schlicht_Winkler_2008} that establishes convergence of the first moment of the dBD process.  Theorem~\ref{thm:tube} quantitatively bounds the probabilities that the dBD and dCLE processes deviate from a narrow tube around the solution of the deterministic thermodynamic limit.

We first precisely describe the general setting and then state our theorems.  All proofs are provided in the supplement~\cite{supplementary_material}.

Consider a system of $D$ biochemical species and $M$ possible reactions.  We are interested in describing the dynamics as a function of a latent system parameter $N$, the system size.  Let $B_{N} (t) \in \mbb{Z}^{D}$ denote the state of the system at time $t$.

Each reaction $R_{j}$ is described by the following:
\begin{enumerate}[leftmargin=*, label=(\alph*), ref=\alph*]

\item
A propensity function $f_{j} : \mbb{R}^{D} \to \mbb{R}^{+}$.  The firing rate of reaction $R_{j}$ is given by $N f_{j} (B_{N} (t) / N)$.

\item
A state-change vector $\mbi{v}_{j} \in \mbb{Z}^{D}$.  The vector $\mbi{v}_{j}$ describes the change in the number of molecules of each species that results from the completion of a reaction of type $j$.

\item
A probability measure $\mu_{j}$ supported on $[0, \tau_{0}]$.  The measure $\mu_{j}$ models the delay that may occur between the initiation and completion of a reaction of type $j$.  If reaction $R_{j}$ is instantaneous, then $\mu_{j} = \delta_{0}$.  If the delay is a fixed value $\tau > 0$, then $\mu_{j} = \delta_{\tau }$.  If $\mu_{j}$ has a density, we denote it by $\kappa_{j}$.
\end{enumerate}

Given a system trajectory up to time $t$, $\vset{ B_{N} (s) : s \leqs t }$, the dSSA can be described as follows:

\begin{enumerate}[leftmargin=*, label=(I\arabic*), ref=I\arabic*]

\item
\label{i:time_reaction}
Sample a time $\xi $ to the next reaction from the exponential distribution with rate
\begin{equation*}
N \sum_{j=1}^{M} f_{j} \left( \frac{ B_{N} (t) }{ N } \right).
\end{equation*}

\item
Select a reaction $R_{k}$  with probability
\begin{equation*}
\frac{ f_{k} (B_{N} (t) / N) }{ \sum_{j=1}^{M} f_{j} (B_{N} (t) / N) }.
\end{equation*}

\item
Sample a delay time $\tau $ from $\mu_{k}$.

\item
If no delayed reactions are set to finish in the time interval $[t, t + \xi ]$, then proceed as follows.  If $\tau = 0$, then move to time $t + \xi $ and set $B(t + \xi ) = B(t) + \mbi{v}_{k}$.  If $\tau > 0$, then move to time $t + \xi $, set $B(t + \xi ) = B(t)$, and put the state change vector $\mbi{v}_{k}$ into a queue along with the designated time of reaction completion, $t + \xi + \tau $.

\item
If a reaction from the past is set to complete in $[t, t + \xi ]$, then move to the time $\hat{t}$ of the completion of the first such reaction, update $B(\hat{t})$ accordingly, and then proceed to~\pref{i:time_reaction}.

\end{enumerate}

Schlicht and Winkler~\cite{Schlicht_Winkler_2008} prove the existence of the stochastic process $(B_{N} (t))$ from which the dSSA samples.  Let $\mbb{B}_{N}$ denote the probability measure on realizations $\omega_{B_{N}/N} (t)$ associated with the scaled process $(B_{N} (t) / N)$.  These realizations lie in the space $\mcal{D}$ of right-continuous functions from $[0,T]$ into $\mbb{R}^{D}$ that possess limits from the left.  Our main results quantify the behavior of $\mbb{B}_{N}$ as $N \to \infty $.  We make the following regularity assumptions on the propensities $f_{j}$.

\begin{enumerate}[leftmargin=*, label=(P\arabic*), ref=P\arabic*]

\item
\label{i:propen_reg}
The functions $f_{j}$ have continuous derivatives of order $2$.

\item
\label{i:propen_supp1}
There exists a compact set $K$ such that $\supp (f_{j}) \subset K$ for all $1 \leqs j \leqs M$.

\item
\label{i:propen_supp2}
For all $1 \leqs j \leqs M$, $f_{j} (\mbi{x}) > 0$ only if all coordinates of the vector $\mbi{v}_{j} + \mbi{x}$ are nonnegative; $f_{j} (\mbi{x}) = 0$ otherwise.

\end{enumerate}

The correct dCLE approximation of $(B_{N} (t) / N)$ is given for $1 \leqs k \leqs D$ by
\begin{equation}
\label{e:dcle_general}
\begin{aligned}
\mrm{d} x_{k} &= \left( \sum_{j=1}^{M} \int_{0}^{\tau_{0} } v_{jk} f_{j} (\mbi{x} (t - s)) \, \mrm{d} \mu_{j} (s) \right) \mrm{d} t \\
&\qquad \qquad {}+ \frac{1}{\sqrt{N}} (\Sigma \, \mathrm{d} \msf{W})_{k},
\end{aligned}
\end{equation}
where $\msf{W}$ is a $D$-dimensional vector of independent standard Brownian motions and $\Sigma^{2}$ is given by
\begin{equation*}
\Sigma^{2}_{lm} = \sum_{j=1}^{M} v_{jl} v_{jm} \int_{0}^{\tau_{0} } f_{j} (\mbi{x} (t-s)) \, \mathrm{d} \mu_{j} (s).
\end{equation*}

Let $L_{N}$ denote the stochastic process described by Eq.~\eqref{e:dcle_general} and let $\mbb{L}_{N}$ denote the probability measure on realizations $\omega_{L_{N}} (t)$ associated with this process.  Theorem~\ref{thm:dssa_dcle} controls the distance between $\mbb{B}_{N}$ and $\mbb{L}_{N}$.  

\begin{m_theorem}
\label{thm:dssa_dcle}
Assume that the propensities $f_{j}$ satisfy~\pref{i:propen_reg}--\pref{i:propen_supp2}.  Fix $T > 0$.  For every continuous observable $\Psi : \mcal{D} \to \mbb{R}$, we have
\begin{equation*}
\int_{\mcal{D}} \Psi (\omega_{B_{N}/N}) \, \mathrm{d} \mbb{B}_{N} - \int_{\mcal{D}} \Psi (\omega_{L_{N}}) \, \mathrm{d} \mbb{L}_{N} \to 0
\end{equation*}
as $N \to \infty $.
\end{m_theorem}

Theorem~\ref{thm:dssa_therm} establishes weak convergence of the scaled process $(B_{N} (t) / N)$ to the thermodynamic limit governed by the delay reaction rate equations
\begin{equation}
\label{e:drre_general}
\mrm{d} x_{k} = \left( \sum_{j=1}^{M} \int_{0}^{\tau_{0} } v_{jk} f_{j} (\mbi{x} (t - s)) \, \mrm{d} \mu_{j} (s) \right) \mrm{d} t.
\end{equation}

\begin{m_theorem}
\label{thm:dssa_therm}
Assume that the propensities $f_{j}$ satisfy~\pref{i:propen_reg}--\pref{i:propen_supp2}.  Fix $T > 0$.  For every continuous observable $\Psi : \mcal{D} \to \mbb{R}$, we have   
\begin{equation*}
\int_{\mcal{D}} \Psi (\omega_{B_{N}/N}) \, \mathrm{d} \mbb{B}_{N} \to \Psi (\mbi{x})
\end{equation*}
as $N \to \infty $, where $\mbi{x}$ denotes the solution of Eq.~\eqref{e:drre_general}.
\end{m_theorem}

Crucial to the proofs of Theorems~\ref{thm:dssa_dcle} and~\ref{thm:dssa_therm} are the discretization of the time interval $[0,T]$ and the development of quantitative controls on processes that approximate $b_{N} \defas B_{N} / N$.  We partition $[0,T]$ into subintervals of length $\Delta = \Delta (N)$.  It is crucial that $\Delta (N)$ scales with $N$ in the right way.  The quantitative pathwise controls in Theorem~\ref{thm:tube} hold if $\Delta (N) = N^{-1/4}$.

We define an Euler-Maruyama discretization $x_{N}$ of~\eqref{e:dcle_general}.

\begin{definition}[The process $x_N(t)$] 

For $t \leqs 0$, define $x_N(t) = b_N(t)$. For integers $k \geqs 1$ and $t = k \Delta $, define $x_N(t)$ recursively by
\begin{equation*}
x_N (k\Delta) = x_N((k-1)\Delta) + A_{1} + A_{2},
\end{equation*}
where
\begin{align*}
A_{1} &= \left[ \sum_{j = 1}^{M} v_j \int_{0}^{\infty} f_j(x_N((k-1)\Delta -s))~d\mu_j(s) \right] \Delta ,
\\
A_{2} &= \frac{\sqrt{\Delta}}{\sqrt{N}}\eta ,
\end{align*}
and $\eta$ is a mean $0$ multivariate Gaussian random variable with correlation matrix $\sigma^2$ defined as
\begin{equation*}
(\sigma^2)_{lm} = \sum_{j = 1}^{M} v_{jl}v_{jm}\int_{0}^{\infty} f_j(x_N((k-1)\Delta -s)) ~d\mu_j(s).
\end{equation*}
For $(k-1) \Delta < t < k \Delta $, define $x_{N} (t)$ by linearly interpolating between $x_{N} ((k-1) \Delta )$ and $x_{N} (k \Delta )$.
\end{definition}

Theorem~\ref{thm:tube} asserts that realizations of $b_{N}$ stay close to those of $x_{N}$ with high probability.

\begin{m_theorem}
\label{thm:tube}
Suppose that $\Delta (N) = N^{-1/4}$.  There exist constants $K_{1}$ and $K_{2}$ such that
\begin{equation}
\label{e:tube}
\mbb{P} \left( \vnorm{ b_{N} - x_{N} }_{\ell } > \frac{K_{1} T \zeta }{N^{1/8}} \right) \leqs K_{2} T e^{-N^{1/4}},
\end{equation}
where $\zeta $ is a system constant defined by
\begin{equation*}
\zeta = 2M \cdot \max_{1 \leqs j \leqs M} \vnorm{f_{j}}_{\infty } \cdot \max_{\substack{ 1 \leqs j \leqs M\\ 1 \leqs k \leqs D}} |v_{jk}|
\end{equation*}
and $\vnorm{ \cdot }_{\ell }$ is the `discretized' norm
\begin{equation*}
\vnorm{ b_{N} - x_{N} }_{\ell } = \max_{\substack{ k \in \mbb{Z} \\ 0 \leqs k \leqs T / \Delta }} \vabs{ b_{N} (k \Delta ) - x_{N} (k \Delta ) }.
\end{equation*}
\end{m_theorem}

\section{Discussion} \label{sec:discuss}

Stochastic differential equations (SDEs) are one of our main tools for modeling noisy processes in nature.  Interactions between the components of a system or network are frequently not instantaneous.  It is therefore natural to include such delay into corresponding stochastic models.  However, the relationship between delay SDEs (dSDEs) and the processes they model has not been fully established.

Delay stochastic differential equations have previously been formally derived from the delay chemical master equation (dCME).  Unlike the chemical master equation, however, the dCME is not closed; this complicates the derivation of dSDE approximations.  Closure in this context means the following:  Let $P(n,t)$ denote the probability that the stochastic system is in state $n$ at time $t$.  The dCME expresses the time derivative of $P(n,t)$ in terms of joint probabilities of the form $P(j,t;k,t - \tau )$ - the probability that the system is in state $j$ at time $t$ and was in state $k$ at time $t - \tau $, where $\tau $ is the delay.  The one-point probability distribution $P(\cdot , t)$ is therefore expressed in terms of two-point joint distributions, resulting in a system that is not closed.  Timescale separation assumptions have been used to close the dCME.  If the delay time is large compared to the other timescales in the system, one may assume that events that occur at time $t - \tau$ are decoupled from those that occur at time $t$ and close the dCME~\cite{Bratsun_Volfson_Tsimring_Hasty_2005, Tian_Burrage_Burrage_Carletti_2007} by assuming the joint probabilities may be written as products:
\begin{equation*}
P(j,t;k,t - \tau ) = P(j,t) P(k,t - \tau ).
\end{equation*}
Having closed the dCME, one may then derive dSDE approximations~\cite{Tian_Burrage_Burrage_Carletti_2007} as well as useful expressions for autocorrelations and power spectra~\cite{Bratsun_Volfson_Tsimring_Hasty_2005}.

Approximations of dSDE type have also been derived using system size expansions such as van Kampen expansions and Kramers-Moyal expansions for both fixed delay~\cite{Galla_2009} and distributed delay~\cite{Lafuerza_Toral_2011}.


Brett and Galla~\cite{Brett_Galla_2013} use the path integral formalism of Martin, Siggia, Rose, Janssen, and de Dominicis to derive the delay chemical Langevin equation (dCLE) without relying on the dCME.  Using this formalism, a moment generating functional may be expressed in terms of the system size parameter $N$ and the sampling rate $\Delta$.  In the continuous-time limit, $\Delta \to 0$, the dCLE may be inferred from the moment generating functional.   However, the Brett and Galla derivation has some limitations.  First, the $\Delta \to 0$ limit cannot be taken without simultaneously letting $N \to \infty$. Intuitively, this is because as $\Delta \to 0$, the Gaussian approximation to the Poisson distribution with mean $N \lambda \Delta$ breaks down unless the parameter $N \lambda$ simultaneously diverges to infinity.  Second, the derivation gives no quantitative information about the distance between the dCLE and the original delay birth-death (dBD) process. 

In this paper, we address these shortcomings. We prove rigorously that the dBD process can be approximated by a class of Gaussian processes that includes the dCLE.  In particular, we establish that for most biophysically relevant propensity functions, the dCLE process will approximate all moments of the dBD process.  The rigorous proof includes bounds on the quality of the approximation in terms of the time $T$ for which the approximation is desired to hold and the characteristic protein number $N$ (see Theorem \ref{thm:tube}).  The error bounds also indicate that the quality of the dCLE approximation worsens with increasing upper bounds on the reaction propensity functions and state-change vectors.  Physically, this means that high reaction rates and reactions that cause large changes in the protein populations are detrimental to the quality of the dCLE approximation.  

The dCLE is one of many Gaussian processes that approximate the dBD process.  Among all Gaussian approximations with noise components that scale as $1 / \sqrt{N}$, the dCLE is optimal because it is the only such approximation that exactly matches the first and second moments of the dBD process.  We formally justify this assertion in the supplement~\cite{supplementary_material} using characteristic functions.  As our simulations of the degrade and fire oscillator demonstrate, the dCLE can significantly outperform other Gaussian approximations at moderate system sizes.

Nevertheless, the quantitative tube estimates in Theorem~\ref{thm:tube} apply to {\itshape any} Gaussian approximation of the dBD process provided the noise scales as $1 / \sqrt{N}$.  This is significant because it is often advantageous to use linear noise approximations of the dCLE.  Delay appears in the drift component of a linear noise approximation but not in the diffusion component.  Linear noise approximations are therefore easier to analyze than their dCLE counterparts.  In particular, elements of the theory of large deviations for Markovian systems can be extended to SDEs with delay in the drift~\cite{schwartz2012noise}.

For metastable systems, our simulations indicate that the dCLE captures both temporal information (such as hitting times for the positive feedback model; see Fig.~\ref{fig:pf_hitting}) and spatial information (such as densities for trajectories corresponding to failed and successful transitions; see Fig.~\ref{fig:toggle}).  This suggests that dCLE approximations may be used to study rare events for biochemical systems that exhibit metastability.

We have shown that the dCLE provides an accurate approximation of a number of stochastic processes.  Although we chose gene regulatory networks in our examples, the theory is applicable to general birth-death processes with delayed events.  SDEs, and the chemical Langevin equation in particular, are fundamental in modeling and understanding the behavior of natural and engineered systems.  We therefore expect that the dCLE will be widely applicable when delays impact system dynamics.

\appendix

\section{Setting and main results}
\label{s:main_results}

Consider a system of $D$ biochemical species and $M$ possible reactions.  We are interested in describing the dynamics as a function of a latent system parameter $N$, the system size.  Let $B_{N} (t) \in \mbb{Z}^{D}$ denote the state of the system at time $t$.

Each reaction $R_{j}$ is described by the following:
\begin{enumerate}[labelindent=\parindent, label=(\alph*), ref=\alph*]

\item
A propensity function $f_{j} : \mbb{R}^{D} \to \mbb{R}^{+}$.  The firing rate of reaction $R_{j}$ is given by $N f_{j} (B_{N} (t) / N)$.

\item
A state-change vector $\mbi{v}_{j} \in \mbb{Z}^{D}$.  The vector $\mbi{v}_{j}$ describes the change in the number of molecules of each species that results from the completion of a reaction of type $j$.

\item
A probability measure $\mu_{j}$ supported on $[0, \tau_{0}]$.  The measure $\mu_{j}$ models the delay that may occur between the initiation and completion of a reaction of type $j$.  If reaction $R_{j}$ is instantaneous, then $\mu_{j} = \delta_{0}$.  If the delay is a fixed value $\tau > 0$, then $\mu_{j} = \delta_{\tau }$.  If $\mu_{j}$ has a density, we denote it by $\kappa_{j}$.
\end{enumerate}

Given a system trajectory up to time $t$, $\vset{ B_{N} (s) : s \leqs t }$, the dSSA can be described as follows:

\begin{enumerate}[leftmargin=*, label=(I\arabic*), ref=I\arabic*]

\item
\label{i:time_reaction}
Sample a time $\xi $ to the next reaction from the exponential distribution with rate
\begin{equation*}
N \sum_{j=1}^{M} f_{j} \left( \frac{ B_{N} (t) }{ N } \right).
\end{equation*}

\item
Select a reaction $R_{k}$  with probability
\begin{equation*}
\frac{ f_{k} (B_{N} (t) / N) }{ \sum_{j=1}^{M} f_{j} (B_{N} (t) / N) }.
\end{equation*}

\item
Sample a delay time $\tau $ from $\mu_{k}$.

\item
If no delayed reactions are set to finish in the time interval $[t, t + \xi ]$, then proceed as follows.  If $\tau = 0$, then move to time $t + \xi $ and set $B(t + \xi ) = B(t) + \mbi{v}_{k}$.  If $\tau > 0$, then move to time $t + \xi $, set $B(t + \xi ) = B(t)$, and put the state change vector $\mbi{v}_{k}$ into a queue along with the designated time of reaction completion, $t + \xi + \tau $.

\item
If a reaction from the past is set to complete in $[t, t + \xi ]$, then move to the time $\hat{t}$ of the completion of the first such reaction, update $B(\hat{t})$ accordingly, and then proceed to~\pref{i:time_reaction}.

\end{enumerate}

Schlicht and Winkler~\cite{Schlicht_Winkler_2008} prove the existence of the stochastic process $(B_{N} (t))$ from which the dSSA samples.  Let $\mbb{B}_{N}$ denote the probability measure on realizations $\omega_{B_{N}/N} (t)$ associated with the scaled process $(B_{N} (t) / N)$.  These realizations lie in the space $\mcal{D}$ of right-continuous functions from $[0,T]$ into $\mbb{R}^{D}$ that possess limits from the left.  Our main results quantify the behavior of $\mbb{B}_{N}$ as $N \to \infty $.  We make the following regularity assumptions on the propensities $f_{j}$.

\begin{enumerate}[leftmargin=*, label=(P\arabic*), ref=P\arabic*]

\item
\label{i:propen_reg}
The functions $f_{j}$ have continuous derivatives of order $2$.

\item
\label{i:propen_supp1}
There exists a compact set $K$ such that $\supp (f_{j}) \subset K$ for all $1 \leqs j \leqs M$.

\item
\label{i:propen_supp2}
For all $1 \leqs j \leqs M$, $f_{j} (\mbi{x}) > 0$ only if all coordinates of the vector $\mbi{v}_{j} + \mbi{x}$ are nonnegative; $f_{j} (\mbi{x}) = 0$ otherwise.

\end{enumerate}

The correct dCLE approximation of $(B_{N} (t) / N)$ is given for $1 \leqs k \leqs D$ by
\begin{equation}
\label{e:dcle_general}
\mrm{d} x_{k} = \left( \sum_{j=1}^{M} \int_{0}^{\tau_{0} } v_{jk} f_{j} (\mbi{x} (t - s)) \, \mrm{d} \mu_{j} (s) \right) \mrm{d} t + \frac{1}{\sqrt{N}} (\Sigma \, \mathrm{d} \msf{W})_{k},
\end{equation}
where $\msf{W}$ is a $D$-dimensional vector of independent standard Brownian motions and $\Sigma^{2}$ is given by
\begin{equation*}
\Sigma^{2}_{lm} = \sum_{j=1}^{M} v_{jl} v_{jm} \int_{0}^{\tau_{0} } f_{j} (\mbi{x} (t-s)) \, \mathrm{d} \mu_{j} (s).
\end{equation*}

Let $L_{N}$ denote the stochastic process described by Eq.~\eqref{e:dcle_general} and let $\mbb{L}_{N}$ denote the probability measure on realizations $\omega_{L_{N}} (t)$ associated with this process.  Theorem~\ref{thm:dssa_dcle} controls the distance between $\mbb{B}_{N}$ and $\mbb{L}_{N}$.  

\begin{m_theorem}
\label{thm:dssa_dcle}
Assume that the propensities $f_{j}$ satisfy~\pref{i:propen_reg}--\pref{i:propen_supp2}.  Fix $T > 0$.  For every continuous observable $\Psi : \mcal{D} \to \mbb{R}$, we have
\begin{equation*}
\int_{\mcal{D}} \Psi (\omega_{B_{N}/N}) \, \mathrm{d} \mbb{B}_{N} - \int_{\mcal{D}} \Psi (\omega_{L_{N}}) \, \mathrm{d} \mbb{L}_{N} \to 0
\end{equation*}
as $N \to \infty $.
\end{m_theorem}

Theorem~\ref{thm:dssa_therm} establishes weak convergence of the scaled process $(B_{N} (t) / N)$ to the thermodynamic limit governed by the delay reaction rate equations
\begin{equation}
\label{e:drre_general}
\mrm{d} x_{k} = \left( \sum_{j=1}^{M} \int_{0}^{\tau_{0} } v_{jk} f_{j} (\mbi{x} (t - s)) \, \mrm{d} \mu_{j} (s) \right) \mrm{d} t.
\end{equation}

\begin{m_theorem}
\label{thm:dssa_therm}
Assume that the propensities $f_{j}$ satisfy~\pref{i:propen_reg}--\pref{i:propen_supp2}.  Fix $T > 0$.  For every continuous observable $\Psi : \mcal{D} \to \mbb{R}$, we have   
\begin{equation*}
\int_{\mcal{D}} \Psi (\omega_{B_{N}/N}) \, \mathrm{d} \mbb{B}_{N} \to \Psi (\mbi{x})
\end{equation*}
as $N \to \infty $, where $\mbi{x}$ denotes the solution of Eq.~\eqref{e:drre_general}.
\end{m_theorem}

Crucial to the proofs of Theorems~\ref{thm:dssa_dcle} and~\ref{thm:dssa_therm} are the discretization of the time interval $[0,T]$ and the development of quantitative controls on processes that approximate $b_{N} \defas B_{N} / N$.  We partition $[0,T]$ into subintervals of length $\Delta = \Delta (N)$.  It is crucial that $\Delta (N)$ scales with $N$ in the right way.  The quantitative pathwise controls in Theorem~\ref{thm:tube} hold if $\Delta (N) = N^{-1/4}$.

We define an Euler-Maruyama discretization $x_{N}$ of~\eqref{e:dcle_general}.

\begin{definition}[The process $x_N(t)$] 

For $t \leqs 0$, define $x_N(t) = b_N(t)$. For integers $k \geqs 1$ and $t = k \Delta $, define $x_N(t)$ recursively by
\begin{equation*}
x_N (k\Delta) = x_N((k-1)\Delta) + \left[ \sum_{j = 1}^{M} v_j \int_{0}^{\infty} f_j(x_N((k-1)\Delta -s))~d\mu_j(s) \right] \Delta + \frac{\sqrt{\Delta}}{\sqrt{N}}\eta ,
\end{equation*}
where $\eta$ is a mean $0$ multivariate Gaussian random variable with correlation matrix $\sigma^2$ defined as
\begin{equation*}
(\sigma^2)_{lm} = \sum_{j = 1}^{M} v_{jl}v_{jm}\int_{0}^{\infty} f_j(x_N((k-1)\Delta -s)) ~d\mu_j(s).
\end{equation*}
For $(k-1) \Delta < t < k \Delta $, define $x_{N} (t)$ by linearly interpolating between $x_{N} ((k-1) \Delta )$ and $x_{N} (k \Delta )$.
\end{definition}

Theorem~\ref{thm:tube} asserts that realizations of $b_{N}$ stay close to those of $x_{N}$ with high probability.

\begin{m_theorem}
\label{thm:tube}
Suppose that $\Delta (N) = N^{-1/4}$.  There exist constants $K_{1}$ and $K_{2}$ such that
\begin{equation}
\label{e:tube}
\mbb{P} \left( \vnorm{ b_{N} - x_{N} }_{\ell } > \frac{K_{1} T \zeta }{N^{1/8}} \right) \leqs K_{2} T e^{-N^{1/4}},
\end{equation}
where $\zeta $ is a system constant defined by
\begin{equation*}
\zeta = 2M \cdot \max_{1 \leqs j \leqs M} \vnorm{f_{j}}_{\infty } \cdot \max_{\substack{ 1 \leqs j \leqs M\\ 1 \leqs k \leqs D}} |v_{jk}|
\end{equation*}
and $\vnorm{ \cdot }_{\ell }$ is the `discretized' norm
\begin{equation*}
\vnorm{ b_{N} - x_{N} }_{\ell } = \max_{\substack{ k \in \mbb{Z} \\ 0 \leqs k \leqs T / \Delta }} \vabs{ b_{N} (k \Delta ) - x_{N} (k \Delta ) }.
\end{equation*}
\end{m_theorem}

\section{Proofs}
\label{s:proofs}

\subsection{Overview}
\label{ss:overiew}

Theorem~\ref{thm:dssa_therm} establishes weak convergence of the scaled process $(b_{N} (t))$ to the thermodynamic limit~\eqref{e:drre_general}.  We prove Theorem~\ref{thm:dssa_therm} in two steps.  First, we show that the family of measures $(\mbb{B}_{N})$ is a tight family on $\mcal{D}$.  By the Prohorov theorem (see \textit{e.g.}~\cite{Billingsley_1999}), the family $(\mbb{B}_{N})$ is then relatively compact in the space of probability measures.  Second, we consider finite sequences of times $t_{1} < t_{2} < \cdots < t_{k}$ in $[0,T]$ and study the finite-dimensional distributions associated with $(b_{N} (t_{1}), \ldots , b_{N} (t_{k}))$ in order to show that the family $(\mbb{B}_{N})$ has a weak limit and to characterize this limit.

We develop pathwise tube estimates to complete the second step of the proof of Theorem~\ref{thm:dssa_therm}.  We use these tube estimates to prove Theorem~\ref{thm:dssa_dcle} as well.

{\bfseries Scaling of the time discretization} $\bsym{\Delta }${\bfseries .}
We assume throughout that $\Delta = \Delta (N)$ satisfies $C_{1} N^{-\alpha } \leqs \Delta (N) \leqs C_{2} N^{-\alpha }$, where $C_{1}$ and $C_{2}$ are constants and $1/4 \leqs \alpha < 1/3$.  Intuitively, $\Delta $ must be sufficiently small so that propensity functions do not change significantly over any time interval of length $\Delta $ and sufficiently large so that many reactions fire over any such interval.

\subsection{Tightness}
\label{ss:tightness}

The space $\mcal{D}$ is a metric space with Skorohod metric $d$ defined as follows.  Let $\Gamma $ denote the set of nondecreasing functions $\gamma : [0,T] \to [0,T]$ with $\gamma (0) = 0$ and $\gamma (T) = T$.  For $\gamma \in \Gamma $, define
\begin{equation*}
\vnorm{\gamma } = \sup_{s \neq t} \vabs{ \ln \left( \frac{\gamma (t) - \gamma (s)}{t-s} \right) }.
\end{equation*}
For functions $z_{1}$ and $z_{2}$ in $\mcal{D}$, define
\begin{align*}
d(z_{1}, z_{2}) = \inf \Big\{ \vep > 0 : &\text{ there exists } \gamma \in \Gamma \text{ with } \vnorm{ \gamma } \leqs \vep 
\\
&\qquad \qquad \qquad \text{and } \sup_{t \in [0,T]} \vabs{z_{1} (t) - z_{2} (\gamma (t)) } \leqs \vep \Big\}.
\end{align*}

Let $(\mu_{n})$ be a sequence of probability measures on $\mcal{D}$.  We recall a characterization of tightness for $(\mu_{n})$.  For $z \in \mcal{D}$ and $0 < \delta < T$, define
\begin{equation*}
w_{z} (\delta ) = \inf_{\vset{t_{i}}} \max_{i} \sup_{s,t \in [t_{i-1}, t_{i})} \vabs{ z(s) - z(t) },
\end{equation*}
where the infimum is taken over partitions of $[0,T]$ such that $t_{i} - t_{i-1} > \delta $ for all $i$.

\begin{proposition}[\cite{Billingsley_1999}]
\label{p:tightD}
The sequence $(\mu_{n})$ is tight if and only if the following hold.
\begin{enumerate}[leftmargin=*, label=(T\arabic*), ref=T\arabic*]

\item
\label{i:tight1}
For every $\eta > 0$, there exists $a$ such that
\begin{equation*}
\mu_{n} ( \{ z : \sup_{t} |z(t)| > a \} ) \leqs \eta
\end{equation*}
for all $n \in \mbb{N}$.

\item
\label{i:tight2}
For all $\vep > 0$ and $\eta > 0$, there exist $0 < \delta < T$ and $n_{0} \in \mbb{N}$ such that
\begin{equation*}
\mu_{n} ( \{ z : w_{z} (\delta ) \geqs \vep \} ) \leqs \eta
\end{equation*}
for all $n \geqs n_{0}$.

\end{enumerate}
\end{proposition}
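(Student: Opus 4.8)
The plan is to derive this characterization from the Prohorov theorem together with the Arzel\`a--Ascoli-type description of relatively compact subsets of the Skorohod space $\mcal{D}$. The decisive preliminary step is to establish that a set $A \subseteq \mcal{D}$ has compact closure if and only if $\sup_{z \in A} \sup_{t} \vabs{z(t)} < \infty$ and $\lim_{\delta \to 0} \sup_{z \in A} w_{z}(\delta) = 0$; this is the exact analogue for c\`adl\`ag paths of the Arzel\`a--Ascoli theorem, with the oscillation modulus $w_{z}(\delta)$ playing the role of the usual modulus of continuity. Since the Skorohod metric $d$ is not complete, I would first pass to Billingsley's equivalent metric $d^{\circ}$ (built from $d$ by additionally penalizing the slope $\vnorm{\gamma}$ of the time change), under which $\mcal{D}$ is separable and complete, hence Polish; because $d$ and $d^{\circ}$ induce the same topology, compactness and tightness are unaffected by the switch. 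With $\mcal{D}$ Polish, the Prohorov theorem identifies tightness of $(\mu_{n})$ with relative compactness for weak convergence, reducing everything to producing compact sets carrying mass $\geqs 1 - \eta$ uniformly in $n$.

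For necessity, suppose $(\mu_{n})$ is tight. Given $\eta > 0$, Prohorov furnishes a single compact $K \subseteq \mcal{D}$ with $\mu_{n}(K) \geqs 1 - \eta$ for all $n$. Applying the compactness characterization to $K$ yields a bound $a$ with $\sup_{t}\vabs{z(t)} \leqs a$ on $K$, so $\set{z : \sup_{t}\vabs{z(t)} > a} \subseteq \mcal{D} \setminus K$ and~\pref{i:tight1} follows. Likewise, for fixed $\vep > 0$ the uniform decay $\sup_{z \in K} w_{z}(\delta) \to 0$ supplies $\delta$ with $\set{z : w_{z}(\delta) \geqs \vep} \subseteq \mcal{D} \setminus K$, giving~\pref{i:tight2} (in fact for all $n$, which is stronger than the stated form).

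For sufficiency, I would manufacture the required compact set directly. Fix $\eta > 0$. By~\pref{i:tight1} choose $a$ with $\mu_{n}(\set{\sup_{t}\vabs{z(t)} > a}) \leqs \eta/2$ for all $n$. By~\pref{i:tight2}, for each $k \in \NN$ select $\delta_{k}$ and $n_{k}$ so that $\mu_{n}(\set{ w_{z}(\delta_{k}) \geqs 1/k }) \leqs \eta\, 2^{-(k+1)}$ for all $n \geqs n_{k}$; the finitely many measures $\mu_{n}$ with $n < n_{k}$ are each individually tight on the Polish space $\mcal{D}$, so after possibly shrinking $\delta_{k}$ the bound can be arranged to hold for every $n$. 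Setting
\begin{equation*}
K = \set{ z \in \mcal{D} : \sup_{t}\vabs{z(t)} \leqs a \text{ and } w_{z}(\delta_{k}) \leqs 1/k \text{ for all } k },
\end{equation*}
the compactness criterion shows $\overline{K}$ is compact (the uniform sup bound together with $w_{z}(\delta_{k}) \leqs 1/k$ forces $\sup_{z \in K} w_{z}(\delta) \to 0$ as $\delta \to 0$), while a union bound gives $\mu_{n}(\mcal{D}\setminus K) \leqs \eta/2 + \sum_{k} \eta\, 2^{-(k+1)} \leqs \eta$ for all $n$. Hence $(\mu_{n})$ is tight.

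I expect the genuine difficulty to lie entirely in the compactness characterization underlying both directions. Verifying that a uniform sup bound together with $\sup_{z} w_{z}(\delta) \to 0$ yields sequential compactness in the Skorohod topology is delicate: unlike the uniform topology on $C[0,T]$, the metric $d$ tolerates small time reparametrizations, so control of $w_{z}(\delta)$ must be translated into simultaneous control of the locations and sizes of the finitely many ``large'' jumps and of the oscillation between them, after which a diagonal extraction over a countable dense set of times produces a convergent subsequence. The bookkeeping around the distinction between ``for all $n$'' and ``for all $n \geqs n_{0}$'' in~\pref{i:tight2}, handled above via individual tightness of the finitely many exceptional measures, is the only other point requiring care.
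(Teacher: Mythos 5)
You should first note a mismatch with the premise of the exercise: the paper contains no proof of this proposition. It is stated with the citation \cite{Billingsley_1999} and imported as a known tool (Billingsley's characterization of tightness on Skorohod space, used immediately afterwards to prove tightness of $(\mbb{B}_{N})$), so the only available comparison is against the standard textbook argument --- which is, in fact, essentially what you reconstruct. Your outline is correct: the result reduces to the Arzel\`a--Ascoli-type characterization of relatively compact subsets of $\mcal{D}$ (uniform sup bound plus $\sup_{z} w_{z}(\delta) \to 0$), necessity follows by applying that characterization to the compact set supplied by tightness, and sufficiency follows by assembling the compact set $K$ from~\pref{i:tight1} and countably many applications of~\pref{i:tight2}, with Ulam's theorem (individual tightness of each Borel probability measure on a Polish space) plus monotonicity of $\delta \mapsto w_{z}(\delta)$ used to upgrade ``for all $n \geqs n_{0}$'' to ``for all $n$.'' This is precisely how Billingsley proves it. Three minor points. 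First, your invocations of the Prohorov theorem are unnecessary in both directions: under the definition of tightness in force here (a single compact set carrying mass $\geqs 1 - \eta$ uniformly in $n$), necessity needs only the definition itself, and sufficiency consists of exhibiting such compact sets directly; Prohorov enters the paper only \emph{after} this proposition, to convert tightness into relative compactness. Second, the completeness detour is moot: the metric $d$ defined in the paper already penalizes the slope of the time change via $\vnorm{\gamma} = \sup_{s \neq t} \vabs{ \ln ((\gamma(t)-\gamma(s))/(t-s)) }$, so it \emph{is} Billingsley's complete metric $d^{\circ}$, not the incomplete classical Skorohod metric. Third, the one substantive ingredient --- the compactness characterization itself --- is asserted rather than proved; you correctly flag it as the real difficulty, and since it is a standard result in the very reference the paper cites, deferring to it is legitimate, but your proof is a reduction to that lemma rather than a self-contained argument.
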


We now prove that $(\mbb{B}_{N})$ is tight.

\begin{lemma} 
\label{lemma:A1}
Suppose $X$ is a Poisson random variable with parameter $\lambda >
0$.  For every $a > \lambda$, we have
\begin{equation}
\label{e:poisson}
\mbb{P} (X > a) \leq \frac{\exp(a - \lambda)}{\left( \frac{a}{\lambda}\right)^a}.
\end{equation}
\end{lemma}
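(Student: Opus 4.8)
The plan is to use the classical Chernoff (exponential Markov) method, which is the natural tool for tail bounds of this exponential-in-$a$ type. First I would fix a parameter $t > 0$ and apply Markov's inequality to the nonnegative random variable $e^{tX}$. Since $X > a$ implies $e^{tX} > e^{ta}$, this gives
\begin{equation*}
\mbb{P}(X > a) \leqs \frac{\mbb{E}[e^{tX}]}{e^{ta}}.
\end{equation*}
The key input is the moment generating function of the Poisson distribution, namely $\mbb{E}[e^{tX}] = \exp(\lambda(e^{t} - 1))$, which follows directly from summing the Poisson series $\sum_{k \geqs 0} e^{tk} e^{-\lambda} \lambda^{k}/k!$ and recognizing the exponential series. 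Substituting yields the bound $\mbb{P}(X > a) \leqs \exp(\lambda(e^{t} - 1) - ta)$, valid for every $t > 0$.

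The second step is to optimize the exponent $\varphi(t) = \lambda(e^{t} - 1) - ta$ over $t > 0$ to obtain the sharpest bound. Differentiating gives $\varphi'(t) = \lambda e^{t} - a$, which vanishes at $t^{*} = \ln(a/\lambda)$; since $\varphi''(t) = \lambda e^{t} > 0$, this is the unique minimizer. Here is where the hypothesis $a > \lambda$ enters crucially: it guarantees $t^{*} = \ln(a/\lambda) > 0$, so the minimizer lies in the admissible range $t > 0$ where the Markov step is valid. Evaluating at $t^{*}$ gives
\begin{equation*}
\varphi(t^{*}) = \lambda\!\left( \frac{a}{\lambda} - 1 \right) - a \ln\!\left( \frac{a}{\lambda} \right) = (a - \lambda) - a \ln\!\left( \frac{a}{\lambda} \right),
\end{equation*}
so that $\exp(\varphi(t^{*})) = \exp(a-\lambda)\,(a/\lambda)^{-a}$, which is exactly the right-hand side of~\eqref{e:poisson}.

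I do not anticipate a genuine obstacle here: the argument is a direct application of the exponential Markov inequality together with the explicit Poisson MGF, and the optimization is a one-variable convex minimization with a closed-form solution. The only point requiring a word of care is verifying that the optimal tilt $t^{*}$ is positive, which is precisely why the statement restricts to $a > \lambda$; for $a \leqs \lambda$ the optimal $t$ would be nonpositive and the method would not apply in this form (nor is a bound needed there, since the right-hand side exceeds $1$). One could note in passing that because $X$ is integer-valued one even has $\{X > a\} \subseteq \{e^{tX} > e^{ta}\}$ without any boundary subtlety, so no separate treatment of integer versus non-integer $a$ is required.
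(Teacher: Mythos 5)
Your proposal is correct and follows essentially the same route as the paper: the exponential Markov (Chernoff) bound with the Poisson moment generating function, minimizing $\lambda(e^{t}-1)-at$ at $t=\ln(a/\lambda)$, which is admissible precisely because $a>\lambda$. Your write-up is somewhat more careful than the paper's (checking convexity and the positivity of the optimal tilt explicitly), but the argument is the same.
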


\begin{proof}
It follows from the Markov inequality that 
\[
\mbb{P} (X > a) \leq \inf_{t>0} \mbb{P} (e^{tX} > e^{ta}) \leq \inf_{t>0} M_X(t)/e^{at} = \inf_{t>0} e^{\lambda (e^t -1) - at}
\]
and so it suffices to minimize the function $\phi(t) \defas \lambda(e^{t}
-1) - at.$ It is easy to see that the minimum is reached at $t =
\ln(a/\lambda)$, provided $a > \lambda$. On substituting back into
$\exp(\lambda(e^t -1) - at)$ we get the result.
\end{proof}

The bound from Lemma~\ref{lemma:A1} is one that we will encounter often. The next lemma simplifies the bound.

\begin{lemma} 
\label{lemma:A2}
(Simplification of~\eqref{e:poisson} in Lemma~\ref{lemma:A1})
\begin{enumerate}[labelindent=\parindent, label=(\alph*), ref=\alph*]
\item Suppose $0 < y < x$. There exists a constant $C_3(x,y)$ such that 
\[
\frac{\exp\left( n [x -y] \right)}{\left(\frac{x}{y}\right)^{nx}}  \leq  e^{-C_3 ny} \to 0 \quad \text{ as } n\to \infty
\]

\item Let $y > 0$, $\epsilon > 0$ and $n\in\mathbb{N}$. Let $x> y +\epsilon.$ There exists a constant $C_4 (y, \epsilon)$ such that 
\[
\frac{\exp\left( n [x -y] \right)}{\left(\frac{x}{y}\right)^{nx}}  \leq  e^{-C_4 nx} \to 0 \quad \text{ as } x\to \infty
\]
\end{enumerate}
\end{lemma}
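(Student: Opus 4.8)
The plan is to take logarithms and reduce both bounds to elementary estimates on a single function. Writing
\[
F(x,y) \defas (x-y) - x \log(x/y),
\]
the logarithm of the left-hand side appearing in both (a) and (b) is exactly $n F(x,y)$, since $\log\bigl(\exp(n[x-y]) / (x/y)^{nx}\bigr) = n(x-y) - nx\log(x/y)$. Thus part (a) amounts to showing $F(x,y) < 0$ whenever $x > y > 0$ (with the resulting constant allowed to depend on both $x$ and $y$), while part (b) amounts to a lower bound of the form $-F(x,y) \geqs C_4 x$ that is uniform over $x > y + \epsilon$, with $C_4$ depending only on $y$ and $\epsilon$.

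For part (a) I would substitute $r = x/y > 1$ and write $F(x,y) = y\,h(r)$, where $h(r) \defas r - 1 - r\log r$. A direct computation gives $h(1) = 0$ and $h'(r) = -\log r$, so $h$ is strictly decreasing on $(1,\infty)$ and hence $h(r) < 0$ for all $r > 1$. Setting $C_3(x,y) \defas -h(x/y) > 0$ then yields $F(x,y) = -C_3 y$, so the left-hand side equals $e^{n F(x,y)} = e^{-C_3 n y}$, which tends to $0$ as $n \to \infty$ because $C_3 y > 0$. (The inequality in the statement holds, in fact, with equality.)

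For part (b) I would instead control the ratio
\[
g(x) \defas \frac{-F(x,y)}{x} = \log(x/y) - 1 + \frac{y}{x}.
\]
One checks that $g(y) = 0$ and $g'(x) = (x-y)/x^2 > 0$ for $x > y$, so $g$ is strictly increasing on $(y,\infty)$. Hence for every $x > y + \epsilon$ we have $g(x) > g(y+\epsilon) \defasr C_4(y,\epsilon)$, a constant depending only on $y$ and $\epsilon$. This gives $F(x,y) \leqs -C_4 x$, so the left-hand side is bounded by $e^{-C_4 n x}$, which tends to $0$ as $x \to \infty$. Positivity of $C_4$ should be recorded explicitly: with $u = \epsilon/y > 0$ one has $C_4 = \log(1+u) - u/(1+u)$, and the elementary inequality $\log(1+u) > u/(1+u)$ for $u > 0$ (a one-line monotonicity argument) confirms $C_4 > 0$.

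The computations here are short and self-contained, so I do not anticipate a genuine obstacle; the only point requiring care is bookkeeping of which variables are held fixed. In part (a) the constant $C_3$ may depend on both $x$ and $y$ and the limit is in $n$, whereas in part (b) the constant $C_4$ must be independent of $x$ so that the bound survives the limit $x \to \infty$. The strict monotonicity of $g$ on $(y,\infty)$ together with $g(y) = 0$ is exactly what supplies this uniformity, and I expect it to be the only load-bearing step of the argument.
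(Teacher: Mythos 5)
Your proof is correct and takes essentially the same route as the paper: compute logarithms and reduce everything to the elementary inequality $z - 1 < z\log z$ for $z > 1$, which is exactly your statement $h(r) < 0$. The paper disposes of part (b) with only ``the proof of the second inequality is similar,'' so your explicit monotonicity argument for $g(x) = \log(x/y) - 1 + y/x$ — the step that makes $C_4$ uniform in $x$, which is the only delicate point — fills in what the paper leaves unwritten.
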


\begin{proof}
First inequality: Since $0< y < x$, the expression above is an indeterminate form of
the type $\infty/\infty$. On computing logarithms, we need to show that
the expression below diverges to $-\infty$:
\[
\phi(n):= n x - ny - n x \ln(x/y) = ny \left[ \frac{x}{y} -1 - \log \left( \frac{x}{y}\right) \right]
\]
This happens if and only if $z -1 < z \log (z)$ for $z = x/y$, which, in turn, is always true when $z > 1$.
The proof of the second inequality is similar.
\end{proof}

Central to the arguments that follow is a bound on the probability that $\vabs{ b_{N} (t_{0} + \Delta ) - b_{N} (t_{0}) } \geqs \Delta \zeta $, where $\zeta $ is appropriately chosen and the bound is uniform in $t_{0}$.  Define
\begin{equation*}
\zeta = 2VM \cdot \max_{i} \vabs{ f_{i} }_{\infty },
\end{equation*}
where $V = \max_{j,k} \vabs{ v_{jk} }$.

\begin{lemma}
\label{l:firing_bound}
Let $E = E(i_{0},t_{0},\Delta )$ be the event that reaction $i_{0}$ fires at least $N \Delta \zeta / (VM)$ times on $[t_{0}, t_{0} + \Delta ]$.  Using Lemma~\ref{lemma:A1} and Lemma~\ref{lemma:A2}, we have
\begin{equation}\label{supp:eq:stretch_exp}
\mbb{B}_{N} (E) \leq \frac{\exp \left( N \Delta [2 \cdot \max\set{|f_i|_\infty} - |f_{i_{0}}|_\infty]\right)}{\left( \frac{2 \cdot \max\set{|f_i|_\infty}}{|f_{i_{0}}|_\infty}\right)^{2N \Delta \cdot \max\set{|f_i|_\infty} }} \leq e^{-2 C_4 N \Delta \max_{i} |f_i|_\infty} \defasr e^{-C_5 N^{1-\alpha}}.
\end{equation}
\end{lemma}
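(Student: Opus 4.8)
The plan is to reduce the event $E$ to a tail bound for a single Poisson random variable and then feed that bound through Lemmas~\ref{lemma:A1} and~\ref{lemma:A2}. The first step is a stochastic domination: I would argue that the number $X_{i_0}$ of times reaction $i_0$ fires on $[t_0, t_0 + \Delta]$ is stochastically dominated by a Poisson random variable with parameter $\lambda = N \Delta |f_{i_0}|_\infty$. In the dSSA, firings of reaction $i_0$ form a counting process whose stochastic intensity at time $t$ equals $N f_{i_0}(B_N(t)/N)$. Because the supremum bound $f_{i_0} \leqs |f_{i_0}|_\infty$ holds pointwise over all states, this intensity never exceeds the constant $N |f_{i_0}|_\infty$, uniformly in $t$ and regardless of the trajectory realized so far. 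A standard coupling, realizing the firings by thinning a homogeneous Poisson process of rate $N |f_{i_0}|_\infty$, then shows that $X_{i_0}$ is dominated by $\Poisson(N \Delta |f_{i_0}|_\infty)$. Crucially, this bound is uniform in $t_0$, which is exactly what the later union bound over the $\lfloor T/\Delta \rfloor$ subintervals requires.

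With the domination in hand, the threshold computation is immediate: since $\zeta = 2VM \cdot \max_i |f_i|_\infty$, the firing count required by $E$ is $N \Delta \zeta / (VM) = 2 N \Delta \max_i |f_i|_\infty$. I would then apply Lemma~\ref{lemma:A1} with $a = 2 N \Delta \max_i |f_i|_\infty$ and $\lambda = N \Delta |f_{i_0}|_\infty$. The hypothesis $a > \lambda$ holds because $2 \max_i |f_i|_\infty > |f_{i_0}|_\infty$ (indeed $\max_i |f_i|_\infty \geqs |f_{i_0}|_\infty$). Substituting $a$ and $\lambda$ into $\exp(a - \lambda)/(a/\lambda)^a$ reproduces exactly the first displayed expression, with $a - \lambda = N\Delta[2\max_i |f_i|_\infty - |f_{i_0}|_\infty]$ and $a/\lambda = 2\max_i |f_i|_\infty / |f_{i_0}|_\infty$.

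For the second inequality I would invoke Lemma~\ref{lemma:A2}(b) with $n = N\Delta$, $x = 2\max_i |f_i|_\infty$, and $y = |f_{i_0}|_\infty$. The hypothesis $x > y + \epsilon$ is satisfied for a fixed $\epsilon$ since $x - y \geqs \max_i |f_i|_\infty > 0$; because there are only finitely many reactions, the resulting constant $C_4$ can be taken uniform over all choices of $i_0$. This yields the middle bound $e^{-C_4 n x} = e^{-2 C_4 N \Delta \max_i |f_i|_\infty}$. Finally, using the assumed scaling $\Delta(N) \geqs C_1 N^{-\alpha}$, I replace $N\Delta$ by $C_1 N^{1-\alpha}$ and set $C_5 = 2 C_4 C_1 \max_i |f_i|_\infty$, which gives $e^{-2C_4 N\Delta\max_i|f_i|_\infty} \leqs e^{-C_5 N^{1-\alpha}}$, as claimed. (The degenerate case $|f_{i_0}|_\infty = 0$ is trivial, as reaction $i_0$ then never fires and $\mbb{B}_N(E) = 0$.)

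The main obstacle is the first step, namely rigorously justifying the Poisson domination in the delay setting. The subtlety is that the intensity $N f_{i_0}(B_N(t)/N)$ is itself random and history-dependent, since it depends on the full delayed trajectory through the queue of pending reactions, so the counting process of firings is not a priori Poisson. The key observation that rescues the argument is that only the uniform upper bound $N|f_{i_0}|_\infty$ on the intensity is needed, and domination of a counting process with bounded stochastic intensity by a homogeneous Poisson process of the corresponding constant rate is a standard fact that does \emph{not} require the intensity to be deterministic. Once this is established, the remainder is bookkeeping with Lemmas~\ref{lemma:A1} and~\ref{lemma:A2}.
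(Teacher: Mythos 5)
Your proposal is correct and follows essentially the same route as the paper, which states the lemma as a direct consequence of Lemma~\ref{lemma:A1} and Lemma~\ref{lemma:A2} with exactly your choice of parameters ($\lambda = N\Delta|f_{i_0}|_\infty$, $a = 2N\Delta\max_i|f_i|_\infty$, then Lemma~\ref{lemma:A2}(b) and the scaling $\Delta\sim N^{-\alpha}$). The paper leaves the Poisson-domination step implicit; your thinning/coupling justification of it, together with the handling of uniformity in $i_0$ and the degenerate case $|f_{i_0}|_\infty = 0$, fills in that gap correctly.
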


\begin{corollary}
\label{c:c_jump_bound}
For all $t_{0} \in [0, T - \Delta ]$, we have
\begin{equation}
\label{e:c_jump_bound}
\mbb{P} (|b_{N} (t_{0} + \Delta ) - b_{N} (t_{0})| \geqs \Delta \zeta ) \leqs M e^{-C_5 N^{1-\alpha}}.
\end{equation}
\end{corollary}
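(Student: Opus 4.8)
The plan is to reduce the large-jump event to the event, controlled by Lemma~\ref{l:firing_bound}, that some single reaction fires anomalously often on $[t_0, t_0 + \Delta]$. First I would write the increment of the unscaled process as a signed count of completed reactions: letting $n_j = n_j(t_0,\Delta )$ denote the number of reactions of type $j$ that complete during $[t_0, t_0 + \Delta ]$, every change in the state vector is produced by the completion of exactly one reaction, so
\begin{equation*}
B_N(t_0 + \Delta ) - B_N(t_0) = \sum_{j=1}^{M} n_j \mbi{v}_j .
\end{equation*}
Taking the coordinatewise maximum norm $\vabs{\cdot }$ and using $V = \max_{j,k} \vabs{v_{jk}}$ gives $\vabs{ B_N(t_0 + \Delta ) - B_N(t_0) } \leqs V \sum_{j=1}^{M} n_j$, and hence, since $b_N = B_N / N$,
\begin{equation*}
\vabs{ b_N(t_0 + \Delta ) - b_N(t_0) } \leqs \frac{V}{N} \sum_{j=1}^{M} n_j .
\end{equation*}

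Next comes a pigeonhole step. On the event $\set{ \vabs{ b_N(t_0 + \Delta ) - b_N(t_0) } \geqs \Delta \zeta }$ the previous display forces $\sum_{j=1}^{M} n_j \geqs N \Delta \zeta / V$. Since there are only $M$ reaction types, at least one index $i_0$ must satisfy $n_{i_0} \geqs N \Delta \zeta / (V M)$; that is, reaction $i_0$ completes at least $N \Delta \zeta / (V M)$ times on $[t_0, t_0 + \Delta ]$, which is precisely the event $E(i_0, t_0, \Delta )$ of Lemma~\ref{l:firing_bound}. Therefore
\begin{equation*}
\set{ \vabs{ b_N(t_0 + \Delta ) - b_N(t_0) } \geqs \Delta \zeta } \subseteq \bigcup_{i_0 = 1}^{M} E(i_0, t_0, \Delta ).
\end{equation*}
Applying the union bound and then the estimate $\mbb{B}_N(E(i_0, t_0, \Delta )) \leqs e^{-C_5 N^{1 - \alpha }}$ of Lemma~\ref{l:firing_bound} to each of the $M$ terms yields $\mbb{P}( \cdots ) \leqs M e^{-C_5 N^{1 - \alpha }}$, uniformly in $t_0 \in [0, T - \Delta ]$ because the bound in Lemma~\ref{l:firing_bound} is independent of $t_0$.

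I expect the main subtlety to be the bookkeeping in the delayed setting: the quantity that actually drives the increment of $B_N$ is the number of \emph{completions} in the window, whereas the Poisson domination underlying Lemma~\ref{l:firing_bound} is most naturally phrased for \emph{initiations}. For fixed delay $\mu_j = \delta_{\tau_j}$ the two counts agree up to a time shift (completions on $[t_0, t_0 + \Delta ]$ are exactly the initiations on $[t_0 - \tau_j, t_0 + \Delta - \tau_j]$, a window of the same length), so the identical tail bound applies; for distributed delay one invokes the displacement/thinning property of the dominating Poisson process that already underlies Lemma~\ref{l:firing_bound}. Granting this, the only quantitative points to verify are that the threshold produced by the pigeonhole argument, $N \Delta \zeta / (V M)$, matches exactly the threshold in Lemma~\ref{l:firing_bound}, and that the union bound contributes precisely the factor $M$; both are immediate from the definition $\zeta = 2 V M \cdot \max_i \vabs{ f_i }_{\infty }$.
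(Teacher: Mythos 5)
Your proof is correct and is essentially the paper's (implicit) argument: the paper states Corollary~\ref{c:c_jump_bound} without an explicit proof, intending exactly the reduction you give, namely that a jump of size $\Delta\zeta$ in $b_N$ forces $\sum_j n_j \geqs N\Delta\zeta/V$ total reaction completions, pigeonhole then yields some reaction type completing at least $N\Delta\zeta/(VM)$ times (precisely the event of Lemma~\ref{l:firing_bound}), and the union bound over the $M$ types supplies the factor $M$. Your added remark on completions versus initiations (handled by displacement of the dominating Poisson process) is a genuine subtlety the paper glosses over, and you resolve it correctly.
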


\begin{proposition}
\label{p:tightness}
The sequence $(\mbb{B}_{N})$ is tight.
\end{proposition}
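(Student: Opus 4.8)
The plan is to verify the two tightness conditions \pref{i:tight1} and \pref{i:tight2} of Proposition~\ref{p:tightD} for the family $(\mbb{B}_N)$, using the firing estimate of Lemma~\ref{l:firing_bound} as the engine. Condition \pref{i:tight1} comes essentially for free from the compact-support hypothesis \pref{i:propen_supp1}: since $\supp(f_j) \subset K$ for every $j$, reaction $R_j$ can fire only while $b_N$ lies in $K$, and a single firing displaces $b_N$ by at most $V/N$ in each coordinate. Hence every realization of $b_N$ stays within a $1/N$-neighborhood of $K$ (and is frozen should it ever step outside $K$), so $\sup_{t} \vabs{b_N(t)}$ is bounded by a constant $a$ depending only on $K$, $V$, and the initial datum, uniformly in $N$. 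This gives \pref{i:tight1} with $a$ independent of $N$.

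The substance is \pref{i:tight2}. Fix $\vep, \eta > 0$. First I would partition $[0,T]$ into the $\lceil T/\Delta \rceil$ intervals $I$ of length $\Delta = \Delta(N)$, and for each such $I$ and each reaction $i_0$ consider the firing event $E(i_0, I)$ of Lemma~\ref{l:firing_bound}. On the complement $G$ of $\bigcup_{i_0, I} E(i_0, I)$, no reaction fires more than $N \Delta \zeta / (VM)$ times on any $I$; hence at most $N \Delta \zeta / V$ firings occur on $I$, each moving $b_N$ by at most $V/N$ per coordinate, so the total variation of $b_N$ over $I$ — and in particular its oscillation — is at most $\Delta \zeta$. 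By Lemma~\ref{l:firing_bound} and a union bound over the $M$ reactions and $\lceil T/\Delta \rceil$ intervals,
\[
\mbb{B}_N(G^{c}) \leqs M \lceil T/\Delta \rceil \, e^{-C_5 N^{1-\alpha}}.
\]

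Next I would convert this per-$\Delta$-interval control into a modulus bound. Choose $\delta$ small enough that $2\delta\zeta < \vep$, and group consecutive $\Delta$-intervals into blocks $J_m$ by accumulating subintervals until the length first exceeds $\delta$; for $N$ large (so $\Delta < \delta$) each block has length in $(\delta, 2\delta]$, yielding a legitimate partition of mesh $> \delta$ for the definition of $w_z(\delta)$. On $G$, summing the oscillation bound $\Delta\zeta$ over the at most $2\delta/\Delta$ subintervals comprising $J_m$ gives oscillation at most $2\delta\zeta < \vep$ on each block, whence $w_{b_N}(\delta) < \vep$ on $G$. Therefore $\{z : w_z(\delta) \geqs \vep\} \subset G^{c}$ and $\mbb{B}_N(\{w_z(\delta) \geqs \vep\}) \leqs M \lceil T/\Delta \rceil \, e^{-C_5 N^{1-\alpha}}$. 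Since $\Delta(N) \geqs C_1 N^{-\alpha}$ forces $\lceil T/\Delta \rceil$ to grow only polynomially in $N$, while $e^{-C_5 N^{1-\alpha}}$ decays like a stretched exponential because $1 - \alpha > 0$, the right-hand side tends to $0$; thus there is $n_0$ with $\mbb{B}_N(\{w_z(\delta) \geqs \vep\}) \leqs \eta$ for all $N \geqs n_0$, establishing \pref{i:tight2}.

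The genuinely delicate quantitative input — the Poisson large-deviation estimate — is already isolated in Lemma~\ref{l:firing_bound}, so within the tightness argument itself the main point requiring care is the bookkeeping of the union bound: one must confirm that the number of subintervals, which grows like $N^{\alpha}$ because $\Delta \to 0$, is overwhelmed by the per-interval failure probability $e^{-C_5 N^{1-\alpha}}$, which is exactly where the constraint $\alpha < 1$ (here $\alpha < 1/3$) is used. A secondary subtlety, easy to overlook, is that tightness requires control of the within-interval \emph{oscillation} of the jump process $b_N$, not merely of its endpoint increments; this is why I invoke the firing-count event of Lemma~\ref{l:firing_bound}, which bounds total variation, rather than the endpoint estimate of Corollary~\ref{c:c_jump_bound} alone.
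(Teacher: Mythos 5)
Your proof is correct and takes essentially the same approach as the paper's, which likewise verifies \pref{i:tight1} from the compact-support assumption on the propensities and \pref{i:tight2} from the per-$\Delta$-interval jump estimates, with the union-bound bookkeeping ($N^{\alpha}$ intervals against the stretched-exponential failure probability $e^{-C_5 N^{1-\alpha}}$) implicit in the paper's two-line argument. The one point where you improve on the paper's write-up is that the paper cites Corollary~\ref{c:c_jump_bound}, which formally bounds only endpoint increments $\vabs{b_N(t_0+\Delta)-b_N(t_0)}$, whereas you correctly invoke the firing-count event of Lemma~\ref{l:firing_bound} to control within-interval oscillation (via total variation), which is what the modulus $w_z(\delta)$ in Proposition~\ref{p:tightD} actually requires; this is a legitimate tightening of a detail the paper glosses over.
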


\begin{proof}
We use Proposition~\ref{p:tightD}.  Condition~\pref{i:tight1} follows from the assumptions on the propensity functions.  Condition~\pref{i:tight2} follows from~\eqref{e:c_jump_bound}.
\end{proof}

\subsection{Tube estimates via characteristic functions}
\label{ss:tubes}

We begin with a technical estimate.

\begin{lemma}\label{dCLE:supp:exponentials_are_close} For every constant $C>0$,
for $N\in\mathbb{N}$ large enough,
\begin{equation*}
\begin{aligned}
\left|
\exp\left(
N\Delta \left[
\sum_{j=1}^M \left( e^{\mathrm{i}\mathbf{r}v_j/N}-1\right) \int_{0}^{\infty}
[f_j(b_N(t -s)) - C\Delta\zeta] ~d\mu_j(s)
\right]
\right) \right.\\
- \left.
\exp\left(
N\Delta \left[
\sum_{j=1}^M \left( e^{\mathrm{i}\mathbf{r}v_j/N}-1\right) \int_{0}^{\infty}
[f_j(b_N(t -s)) + C\Delta\zeta] ~d\mu_j(s)
\right]
\right)
\right|\\
\hfill \leq 2C \Delta^2 \zeta |\mathbf{r}| \left(\sum_{j = 1}^M |v_j|\right) + 
\frac{C}{2} |\mathbf{r}|^2 \Delta^3 \zeta \left( \sup_{j} M |f_j|_\infty |v_j|\right)^2.
\end{aligned}
\end{equation*}
\end{lemma}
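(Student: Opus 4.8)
The plan is to exploit the fact that the two exponentials differ only through the additive shift $\pm C\Delta\zeta$ inside the inner integrals, which factors out cleanly. Writing $g_j \defas \int_0^\infty f_j(b_N(t-s))\,d\mu_j(s)$ and $w_j \defas e^{\mathrm{i}\mathbf{r}v_j/N}-1$, and using that each $\mu_j$ is a probability measure so that $\int_0^\infty C\Delta\zeta\,d\mu_j(s) = C\Delta\zeta$, the two arguments become $A_{\pm} = P \pm Q$, where $P \defas N\Delta\sum_{j=1}^M w_j g_j$ is common to both and $Q \defas N\Delta^2 C\zeta\sum_{j=1}^M w_j$ carries the shift. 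The difference then factors as
\[
\exp(A_-)-\exp(A_+) = e^{P}\bigl(e^{-Q}-e^{Q}\bigr) = -2 e^{P}\sinh Q,
\]
so the whole estimate reduces to controlling $|e^{P}|$ and $|\sinh Q|$.

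First I would record the two sign facts that keep the prefactor bounded. Since $f_j \geq 0$ and $\mu_j \geq 0$ we have $g_j \geq 0$, and since $\operatorname{Re} w_j = \cos(\mathbf{r}v_j/N)-1 \leq 0$, the real part $\operatorname{Re} P = N\Delta\sum_j(\cos(\mathbf{r}v_j/N)-1)g_j$ is $\leq 0$. Hence $|e^{P}|=e^{\operatorname{Re}P}\leq 1$, and the integral representation $e^{P}-1 = P\int_0^1 e^{tP}\,dt$ gives the companion bound $|e^{P}-1|\leq|P|$. The elementary inequality $|e^{\mathrm{i}\theta}-1|\leq|\theta|$ together with $|\mathbf{r}v_j|\leq|\mathbf{r}|\,|v_j|$ yields $|w_j|\leq|\mathbf{r}|\,|v_j|/N$, whence $|Q|\leq C\Delta^2\zeta|\mathbf{r}|\sum_j|v_j|$. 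Combining $|e^{P}|\leq 1$ with $|\sinh Q|\leq|Q| + |\sinh Q - Q|$ and keeping only the linear term produces exactly the first summand $2C\Delta^2\zeta|\mathbf{r}|(\sum_j|v_j|)$ of the claimed bound.

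The second summand comes from the leading correction in $-2e^{P}\sinh Q$. Writing $e^{P} = 1 + (e^{P}-1)$ and $\sinh Q = Q + (\sinh Q - Q)$, I would expand
\[
-2e^{P}\sinh Q = -2Q \;-\; 2(e^{P}-1)Q \;-\; 2e^{P}(\sinh Q - Q),
\]
and estimate the cross term $2(e^{P}-1)Q$ using $|e^{P}-1|\leq|P|\leq\Delta|\mathbf{r}|\sum_j|v_j|\,|f_j|_\infty$ (here $g_j\leq|f_j|_\infty$) together with the bound on $|Q|$ already obtained; bounding each sum by $M\sup_j|f_j|_\infty|v_j|$ turns this into the stated term $\tfrac{C}{2}|\mathbf{r}|^2\Delta^3\zeta(\sup_j M|f_j|_\infty|v_j|)^2$. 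The remaining piece $2e^{P}(\sinh Q - Q)$ is $O(|Q|^3)=O(\Delta^6)$, and since $\Delta=\Delta(N)\to 0$ it is dominated by the second term once $N$ is large, which is the source of the ``$N$ large enough'' hypothesis. The only genuine subtlety—and the step I would be most careful about—is the sign bookkeeping on $\operatorname{Re}P$ that pins $|e^{P}|\leq 1$ and $|e^{P}-1|\leq|P|$; without it the prefactor $e^{P}$ could grow and neither term would be of the claimed order. Everything else is routine bounding of complex exponentials, organised so that precisely the linear term and the first cross term survive.
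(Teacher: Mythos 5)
Your factorization $e^{A_-}-e^{A_+}=-2e^{P}\sinh Q$ together with the observation $\operatorname{Re}P\leq 0$ (hence $|e^{P}|\leq 1$) is a sound idea, and it is genuinely different from, and cleaner than, the paper's own argument, which Taylor-expands each $e^{\mathrm{i}\mathbf{r}v_j/N}-1$ in powers of $1/N$ and then bounds the difference of the two exponentials by the difference of their arguments plus half the difference of the squares of the arguments, without ever making the modulus control on the prefactor explicit. However, as written your proof has a genuine gap in the cross term. The splitting $e^{P}=1+(e^{P}-1)$ produces the term $2|e^{P}-1|\,|Q|\leq 2|P|\,|Q|\leq 2C|\mathbf{r}|^{2}\Delta^{3}\zeta\left(\sum_{j}|v_j|\,|f_j|_{\infty}\right)\left(\sum_{j}|v_j|\right)$, which is of the same order $\Delta^{3}$ as the second term of the claimed bound, so it cannot be absorbed by taking $N$ large; it must be dominated by that term outright. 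Your final step does this by ``bounding each sum by $M\sup_{j}|f_j|_{\infty}|v_j|$,'' but the inequality $\sum_{j}|v_j|\leq M\sup_{j}|f_j|_{\infty}|v_j|$ is false in general---it requires the propensities to satisfy $|f_j|_{\infty}\gtrsim 1$---and even granting it, the constant you obtain is $2C$ rather than $C/2$. Concretely, with $M=1$, $|v_1|=1$, $|f_1|_{\infty}=\epsilon<4$, and $\mathbf{r}$ parallel to $v_1$, your cross-term bound $2C\epsilon|\mathbf{r}|^{2}\Delta^{3}\zeta$ exceeds the entire second-term budget $\tfrac{C}{2}\epsilon^{2}|\mathbf{r}|^{2}\Delta^{3}\zeta$, while the first-term budget is already essentially consumed by $2|Q|$; so your chain of inequalities does not deliver the stated estimate.

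The gap is easily repaired inside your own framework, and the repair simplifies the proof: do not split $e^{P}$ at all. Since $|e^{P}|\leq 1$, you get directly $|e^{A_-}-e^{A_+}|=2|e^{P}|\,|\sinh Q|\leq 2|\sinh Q|\leq 2|Q|+2|\sinh Q-Q|$. The piece $2|Q|\leq 2C\Delta^{2}\zeta|\mathbf{r}|\sum_{j}|v_j|$ is exactly the first term of the claim, and $2|\sinh Q-Q|=O(|Q|^{3})=O(\Delta^{6})$ for fixed $\mathbf{r}$, which for $N$ large enough lies strictly below the claimed $\Delta^{3}$ term (this is precisely where the hypothesis ``$N$ large enough'' enters; it tacitly assumes $\mathbf{r}\neq 0$ and $\sup_{j}|f_j|_{\infty}|v_j|>0$, an assumption the paper's own proof also makes implicitly). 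Put differently, the cross term you introduced is pure loss: one already has $|-2e^{P}Q|\leq 2|Q|$, so writing $-2e^{P}Q=-2Q-2(e^{P}-1)Q$ and then applying the triangle inequality can only inflate the estimate, and here it inflates it past the bound you are trying to prove.
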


\begin{proof}
\begin{equation*}
\begin{aligned}
\left|
\exp\left(
N\Delta \left[
\sum_{j=1}^M \left( e^{\mathrm{i}\mathbf{r}v_j/N}-1\right) \int_{0}^{\infty}
[f_j(b_N(t -s)) - C\Delta\zeta] ~d\mu_j(s)
\right]
\right) \right.\\
- \left.
\exp\left(
N\Delta \left[
\sum_{j=1}^M \left( e^{\mathrm{i}\mathbf{r}v_j/N}-1\right) \int_{0}^{\infty}
[f_j(b_N(t -s)) + C\Delta\zeta] ~d\mu_j(s)
\right]
\right)
\right| \\
\leq \left|
\exp\left( 
N\Delta \sum_{j = 1}^M \left\{\int_{0}^{\infty} \left[f_j(b_N(t -s)) - C\Delta \zeta\right]~ d\mu_j(s)\right\} \left( \frac{\mathrm{i} \mathbf{r} v_j}{N} - \frac{(\mathbf{r}v_j)^2}{N^2} + \mathcal{O}(N^{-3})\right)
\right)\right.\\
 \left. -
 \exp\left( 
N\Delta \sum_{j = 1}^M \left\{\int_{0}^{\infty} \left[f_j(b_N(t -s)) + C\Delta \zeta\right]~ d\mu_j(s)\right\} \left( \frac{\mathrm{i} \mathbf{r} v_j}{N} - \frac{(\mathbf{r}v_j)^2}{N^2} + \mathcal{O}(N^{-3})\right)
\right)
\right|\\
\leq \left|
N\Delta \sum_{j = 1}^M \left( \frac{\mathrm{i}\mathbf{r}v_j}{N} - \frac{(\mathbf{r}v_j)^2}{N^2} + \mathcal{O}(N^{-3})\right) \left( \int_{0}^{\infty} 2 C\Delta \zeta ~ d\mu_j(s)\right)
\right|\\
+ \frac{1}{2} \left|
N^2\Delta^2 \left\{ 
\sum_{j = 1}^M \left( \frac{\mathrm{i}\mathbf{r}v_j}{N} - \frac{(\mathbf{r}v_j)^2}{N^2} + \mathcal{O}(N^{-3})\right)^2\right.\right.\\
\left.\left. \left\{
\int_{0}^\infty \left[(f_j(b_N(t - s)) - C \Delta \zeta)^2 -(f_j(b_N(t - s)) + C \Delta \zeta)^2 \right] d\mu_j(s)
\right\}
\right\}
\right|\\
\leq  2C \Delta^2 \zeta |\mathbf{r}| \left( \sum_{j = 1}^{M} |v_j|\right) + \frac{C}{2}|\mathbf{r}|^2 \Delta^3 \zeta \left( \sum_{j = 1}^M \int_{0}^\infty |v_j f_j(t - s)|~ d\mu_j(s)  \right)^2
\end{aligned}\phantom{\hfill}
\end{equation*}
\end{proof}

\begin{proposition}\label{dCLE:prop:3}
Define $\varphi = B_{N} (t_{0} + \Delta ) - B_{N} (t_{0})$.  For $\mathbf{r}\in \mathbb{R}^D$ and $\epsilon > 0$, we have
\begin{equation}
\begin{aligned}
\left|E\left[
e^{i \mathbf{r} \varphi /N} 
\right] 
- 
E\left[
\exp\left( 
N \Delta \sum_{j = 1}^M\left( e^{\mathrm{i}\mathbf{r} v_j /N} -1\right)  \int_{0}^{\infty}   f_j (b_N(t_0 -s)) ~d\mu(s) 
\right)
\right]
\right|\\
\leq 5 |\mathbf{r}| \Delta^2 \zeta |f'|_\infty + o(\Delta^{2+\epsilon})
\end{aligned}
\end{equation} where 
\begin{equation}
|f'|_\infty \defas \left. \sup_{j} \sup_{t\geq 0}\frac{d}{d x_t} \int_{0}^\infty  f_j(x_{t-s})~d\mu_j(s)\right|_{x_t = b_{N} (t)} \defasr \nabla_{N}.
\end{equation} 
Since  $f_i \in \mathcal{C}^2$, we may also take $|f'|_\infty  = \sup_{j} |f_j'|_\infty.$
\end{proposition}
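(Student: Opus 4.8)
The plan is to recognize the second expectation in the statement as the characteristic function of an idealized increment $\widetilde{\varphi} = \sum_{j=1}^M v_j P_j$, where the $P_j$ are independent Poisson random variables with means $\lambda_j = N\Delta \int_0^\infty f_j(b_N(t_0-s))\,d\mu_j(s)$. Indeed, for such a sum one computes $E[e^{i\mathbf{r}\widetilde\varphi/N}] = \prod_j \exp(\lambda_j(e^{i\mathbf{r}v_j/N}-1))$, which is exactly the exponential appearing inside the second expectation. So the proposition asserts that the true increment $\varphi = B_N(t_0+\Delta)-B_N(t_0)$ of the dBD process is, at the level of characteristic functions, close to this product-of-Poissons approximation. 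I would prove the bound conditionally on the history $\mathcal{F}_{t_0}$ and then integrate.

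First I would describe the exact conditional law of $\varphi$. Conditional on the trajectory up to $t_0$ and on the realized path of $b_N$ on $[t_0,t_0+\Delta]$, the completions of reaction $j$ in $[t_0,t_0+\Delta]$ arise by thinning and time-shifting the initiation process: a completion at time $t$ with delay $s$ comes from an initiation at $t-s$, and initiations occur at rate $Nf_j(b_N(\cdot))$. Superposing over delays and integrating over the completion window yields conditional mean $\Lambda_j = N\int_{t_0}^{t_0+\Delta}\int_0^\infty f_j(b_N(t-s))\,d\mu_j(s)\,dt$, and by the thinning/superposition property the $j$-th count is (conditionally) Poisson with the counts for distinct $j$ independent. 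Instantaneous reactions ($\mu_j=\delta_0$) fit the same formula since $\int f_j(b_N(t-s))\,d\delta_0(s) = f_j(b_N(t))$. Thus, conditionally, $E[e^{i\mathbf{r}\varphi/N}]$ equals $\exp(\sum_j(e^{i\mathbf{r}v_j/N}-1)\Lambda_j)$ up to the higher-order error from the self-exciting feedback of the state on the rate within the window.

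Next I would replace $\Lambda_j$ by $\lambda_j$. On the good event of Lemma~\ref{l:firing_bound} (that every reaction fires few enough times in each $\Delta$-window), intersected via a union bound over a $\Delta$-grid of $[t_0-\tau_0,t_0]$ so that it holds for all relevant time shifts, the firing-count control forces $|b_N(t-s)-b_N(t_0-s)| \leqs \Delta\zeta$ throughout $t\in[t_0,t_0+\Delta]$; then the $C^2$-regularity~\pref{i:propen_reg} gives $|f_j(b_N(t-s))-f_j(b_N(t_0-s))| \leqs |f'|_\infty\,\Delta\zeta$. Hence the effective per-delay propensity driving the true count lies in the band $[\,f_j(b_N(t_0-s))-C\Delta\zeta,\ f_j(b_N(t_0-s))+C\Delta\zeta\,]$ with $C=|f'|_\infty$. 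Both the true conditional characteristic function and the target exponential then have exponents lying on the segment between the two extreme propensity shifts (and these exponents have nonpositive real part, so the exponentials have modulus at most $1$); their difference is therefore bounded by the difference of the two extreme exponentials, which is exactly what Lemma~\ref{dCLE:supp:exponentials_are_close} estimates. The leading contribution $2C\Delta^2\zeta|\mathbf{r}|(\sum_j|v_j|) = O(|\mathbf{r}|\Delta^2\zeta|f'|_\infty)$ produces the stated $5|\mathbf{r}|\Delta^2\zeta|f'|_\infty$ after absorbing the state-change factors into the constant, while the quadratic term $\tfrac{C}{2}|\mathbf{r}|^2\Delta^3\zeta(\cdots)^2 = O(\Delta^3)$ folds into $o(\Delta^{2+\epsilon})$.

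Finally I would integrate the conditional estimate over $\mathcal{F}_{t_0}$: on the complement of the good event the two characteristic functions each have modulus at most $1$, so that contribution is at most twice the exponentially small failure probability $Me^{-C_5 N^{1-\alpha}}$, which is $o(\Delta^{2+\epsilon})$ since $\Delta=\Delta(N)$ is only polynomially small. I expect the main obstacle to be the second step: rigorously establishing the conditionally-Poisson structure of the increment and controlling the self-exciting feedback (the firing rate depends on a state that itself moves by $O(\Delta\zeta)$ across the window, and reactions may be both initiated and completed inside $[t_0,t_0+\Delta]$) at order $o(\Delta^{2+\epsilon})$. Once the increment is sandwiched between the $\pm C\Delta\zeta$ propensity perturbations, the remainder is a direct application of Lemma~\ref{dCLE:supp:exponentials_are_close} together with Lemma~\ref{l:firing_bound} and Corollary~\ref{c:c_jump_bound}.
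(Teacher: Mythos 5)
Your proposal follows essentially the same route as the paper's proof: you identify the target exponential as the characteristic function of a frozen-propensity (independent-Poisson) increment $\tilde\varphi$, restrict to the high-probability event from Lemma~\ref{l:firing_bound}/Corollary~\ref{c:c_jump_bound} on which the propensities over the window are sandwiched within $\pm|f'|_\infty\Delta\zeta$ of their frozen values, apply Lemma~\ref{dCLE:supp:exponentials_are_close} to the two extreme exponentials, and absorb the exponentially small bad-event contribution into $o(\Delta^{2+\epsilon})$ --- which is exactly the paper's argument, organized via conditioning on $\mathcal{F}_{t_0}$ rather than the paper's truncation $\varphi_{(k)}$. Your added care about past increments (the union bound over a $\Delta$-grid of $[t_0-\tau_0,t_0]$, needed to compare $f_j(b_N(t-s))$ with $f_j(b_N(t_0-s))$) is a point the paper glosses over, and is a sound refinement rather than a departure.
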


\begin{proof}

Corollary~\ref{c:c_jump_bound} gives
$\mathbb{B}_N \set{|\varphi |/N > \Delta\zeta} \leqs M e^{-C_5 N^{1-\alpha}}$. 
Define a random variable 
$\varphi_{(k)}$ as follows:
\begin{equation*}
\varphi_{(k)} = \begin{cases}
\varphi & \text{ if $|\varphi_{i}| < k$ for $1\leq i \leq D$}\\
0 & \text{ if $|\varphi_{i}|  \geq k$ for some $1 \leq i \leq D$}
\end{cases}.
\end{equation*} 
Abusing notation, we write $\varphi_{k}$ for $\varphi_{(k)}$ throughout the proof.  One can see that $\varphi_k \to \varphi$ pointwise, and by the  
dominated convergence theorem we
can directly obtain that $E[\varphi_k] \to E[\varphi]$. However, in doing so we
don't obtain a rate. For this reason, we perform a more careful computation.

Define sets $\set{\varphi \leq l}$ as the set of all outcomes for which $\varphi_{(i)} \leq l$ for all $1 \leq i \leq D$. The complement of this set is denoted as $\set{\varphi > l}$. If For any $l \in \mathbb{N}$, 
$\set{N l\zeta \Delta \leq \varphi  \leq N(l+1)\zeta \Delta} \subset \set
{\varphi \geq N l\zeta \Delta} $, and so from Equation \eqref{supp:eq:stretch_exp} 
it follows that $\mathbb{B}_N \set{N l \zeta \Delta \leq \varphi \leq N (l+1) 
\zeta \Delta} \leq e^{-C_4 N\Delta l\zeta}$ (by Lemma \ref{lemma:A2}). It then follows that
\begin{equation*}
\begin{aligned}
\int_{\varphi \leq N\zeta\Delta} \varphi + \sum_{l \geq 1} lN\Delta\zeta e^{-C_4lN\Delta\zeta} \leq \int \varphi \leq \int_{\varphi \leq N\zeta\Delta} \varphi + \sum_{l \geq 1} (l+1)N\Delta\zeta e^{-C_4lN\Delta\zeta}
\end{aligned}
\end{equation*} and so $|E[\varphi] - E[\varphi_{N\Delta\zeta}]| \leq \zeta\sum_{l \geq 1} (l+1)N\Delta e^{-C_4l N \Delta\zeta}.$ The bound on the right is smaller than $2 \zeta e^{-C_4 \zeta N^{1 - \alpha}}.$ An analogous computation can be used to show that $|E[e^{\mathrm{i}\mathbf{r} \varphi/N}] - E[e^{\mathrm{i}\mathbf{r} \varphi_{N\Delta\zeta}/N}]| \leq 4 \zeta e^{-C_4 \zeta N^{1 - \alpha}}.$ Therefore, it is enough to approximate $E[e^{\mathrm{i}\mathrm{r} \varphi_{k}}]$ for $k = N\zeta\Delta$.

Let $\tilde{\varphi}$ be the random variable that defines the change to a process over the interval $[t_0, t_0 + \Delta]$ that has constant propensity
functions $\int_{0}^{\infty}f_j(b_N(t_0 -s))~ d\mu_j(s)$ for the reaction that contributes a change $v_j$ to the system. Call $\tilde{P_N}$ the stationary measure for such a process. It follows from definition that
\begin{equation}
\begin{split}
\left| 
E[e^{\mathrm{i}\mathbf{r} \varphi_k/N}] - E[e^{\mathrm{i}\mathbf{r} \tilde{\varphi}/N}]
\right| \leq \bigg|
\int_{\varphi \leq k} e^{\mathrm{i}\mathbf{r} \varphi_k/N}~ d\mathbb{B}_N - 
&\int_{\varphi \leq k} e^{\mathrm{i}\mathbf{r}\tilde{\varphi}/N}~d\tilde{P_N} \bigg| \\
&{}+ \sum_{l \geq k+1} \tilde{P_N}(l) + \sum_{l \geq k+1} \mathbb{B}_N(l).
\end{split}
\end{equation} $\tilde{P_N}(l)$ is abbreviation for $\tilde{P_N} \set{\tilde l-1 < {\varphi} \leq l}$ (and analogously for $\mathbb{B}_N(l)$).

We now compute bounds for the first term on the right. Recall that $k = N\Delta\zeta$. Since the largest size of any co-ordinate of  $\tilde{\varphi}$ is 
smaller than $N\Delta\zeta$, the arguments to the propensity functions change by at most $\Delta\zeta$. From the smoothness of the propensity functions it follows
that for any $t_0 \leq t  \leq t_0+\Delta$, 
\begin{equation*}
\begin{aligned}
\Delta \int_{0}^{\infty} f_j(b_N(t_0 -s)) ~ d\mu_j(s) - \Delta^2 \zeta|f'|_\infty
\leq \Delta \int_{0}^{\infty} f_j(b_N(t-s))~ d\mu_j(s) \\
\leq 
\Delta \int_{0}^{\infty} f_j(b_N(t_0 -s))~d\mu_j(s) + \Delta^2 \zeta |f'|_\infty.
\end{aligned}
\end{equation*} The estimate in the middle is the infinitesimal rate for the process
$\varphi_k$, and can be approximated only in terms of information available up to
time $t_0$. Therefore, 
\begin{equation}
\begin{aligned}
\left|
\int_{\varphi \leq k} e^{\mathrm{i}\mathbf{r} \varphi_k/N}~d\mathbb{B}_N - \int_{\varphi \leq k} e^{\mathrm{i}\mathrm{r} \tilde{\varphi}/N}~d\tilde{P_N}
\right| &\leq \Big|
E\left[
e^{N\Delta \sum_{j = 1}^M(e^{\mathrm{i}\mathbf{r}v_j/N} -1)(\int_{0}^{\infty} f_j(b_N(t_0 -s)) ~ d\mu_j(s) - \Delta \zeta |f'|_\infty)}
\right] \\
&{}- E\left[
e^{N\Delta \sum_{j = 1}^M(e^{\mathrm{i}\mathbf{r}v_j/N} -1)(\int_{0}^{\infty} f_j(b_N(t_0 -s)) ~ d\mu_j(s) + \Delta \zeta |f'|_\infty)}
\right]
\Big| \\
&\leq 5 |f'|_\infty \Delta^2 \zeta |\mathbf{r}| \left(\sum_{j = 1}^M |v_j|\right)
+ \mathcal{O}(\Delta^{2+\epsilon})
\end{aligned}
\end{equation} 
by Lemma~\ref{dCLE:supp:exponentials_are_close}.
The other terms have tails of the order of $e^{-N^{1 - \alpha}}$
and so, for large enough $N$, the dominant term is the one involving $\Delta^2$. Finally, we observe that 
\begin{equation}
E\left[e^{\mathrm{i}\mathbf{r}\tilde{\varphi}/N}\right] = E\left[
\exp\left( 
N \Delta \sum_{j = 1}^M\left( e^{\mathrm{i}\mathbf{r} v_j /N} -1\right)  \int_{0}^{\infty}   f_j (b_N(t_0 -s)) ~d\mu(s) 
\right)
\right].
\end{equation}
\end{proof}

Consider the approximation to the characteristic function for the jump $\varphi$ on the interval $[t_0, t_0+\Delta]$. On expanding the term $e^{\mathrm{i}\mathbf{r}v_j/N -1}$ term using a Taylor series expansion, we can approximate the expression 
\begin{equation} \label{dCLE:supp:e:FT_SSA}
E\left[
\exp\left(
N\Delta \sum_{j = 1}^{M} (e^{\mathrm{i}\mathbf{r}v_j/N} -1 ) \int_{0}^{\infty} f_j(b_N(t_0 -s))~d\mu_j(s)
\right)
\right]
\end{equation} by
\begin{equation}\label{dCLE:supp:e:FT_CLE}
E \left[
\exp\left(
\Delta \sum_{j = 1}^{M} \mathrm{i}\mathbf{r} v_j \int_{j = 0}^{\infty} f_j(b_N(t_0 -s))~d\mu_j(s) - \frac{\Delta}{N} \sum_{j = 1}^{M} (\mathbf{r} v_j)^2 \int_{0}^{\infty} f_j(b_N(t_0 -s))~ d\mu_j(s)
\right)
\right]
\end{equation} with an error that is $\mathcal{O}(N^{-2})$. However, this is the characteristic function for the Gaussian random variable
\begin{equation}
\begin{aligned}
\delta \mathbf{x}_t = \left[
\sum_{j = 1}^{M} v_j \int_{0}^{\infty} f_j(b_N(t - s)) ~ d\mu_j(s)
\right] \Delta + \frac{\sqrt{\Delta}}{\sqrt{N}}  \eta
\end{aligned}
\end{equation}\label{dCLE:supp:e:sigma}
where $\eta$ is a mean $0$ multivariate Gaussian random variable with correlation matrix $\sigma^2$, with
\begin{equation}
(\sigma^2)_{l,m} = \sum_{j = 1}^{M} v_{jl}v_{jm}\int_{0}^{\infty} f_j(b_N(t-s))~d\mu_j(s).
\end{equation}

This suggests the construction for the following approximating Gaussian process, defined as a discrete stochastic differential equation with Gaussian jumps. 
\begin{definition}[The process $x_N(t)$] \label{dCLE:supp:defn:dcle}
For $t \leq 0$, define $x_N(t) = b_N(t)$. For $t = k\Delta, k\geq 1$, define $x_N(t)$ recursively as
\begin{equation*}
x_N(k\Delta) = x_N((k-1)\Delta) + \left[
\sum_{j = 1}^{M} v_j \int_{0}^{\infty} f_j(x_N((k-1)\Delta -s))~d\mu_j(s)
\right] \Delta + \frac{\sqrt{\Delta}}{\sqrt{N}}\eta
\end{equation*} where $\eta$ is a mean 0 multivariate Gaussian random variable with correlation matrix
$\sigma^2$ defined as
\begin{equation*}
(\sigma^2)_{lm} = \sum_{j = 1}^{M} v_{jl}v_{jm}\int_{0}^{\infty} f_j(x_N((k-1)\Delta -s)) ~d\mu_j(s)
\end{equation*}
\end{definition}

\begin{definition}[The process $y_N(t)$]
For $t  \leq 0$ define the process $y_N(t) = b_N(t)$. For $t = k\Delta, k\geq 1$, define $y_N(t)$
recursively as 
\begin{equation*}
y_N(k\Delta) = y_N((k-1)\Delta) + \left[
\sum_{j = 1}^{M} v_j \int_{0}^{\infty} f_j(y_N((k-1)\Delta -s))~d\mu_j(s)
\right] \Delta.
\end{equation*}
\end{definition}

The next proposition estimates the probability of finding the processes $b_{N}$ and $x_N$ outside a tube around $y_N$ of radius greater than $4 T\zeta N^{- \frac{1}{2} - \frac{3\alpha}{2}}$. As the proposition shows, the Gaussian process has a smaller tail (of the order $N^{\alpha} e^{-N^{2\alpha}}$) than the birth-death process (which has a tail of order $N^{\alpha (1 - N^{\alpha})}$).

\begin{proposition}[Pathwise control on $b_N, x_N$ and $y_N$] \label{dCLE:supp:prop:pathwise}
For $1 \leq k \leq (T/\Delta)$, we have 
\[
\mathbb{B}_N \left(
\|b_N(k\Delta) - y_N(k\Delta)\|  > \frac{4k \zeta}{\sqrt{N\Delta}}
\right) \leq k \Delta^{\Delta^{-1}}
\] and 
\[
\mathbb{X}_N \left(
\|x_N(k\Delta) - y_N(k\Delta)\|  > \frac{4k \zeta}{\sqrt{N\Delta}}
\right) \leq k e^{-\Delta^{-2}/2}.
\]
\end{proposition}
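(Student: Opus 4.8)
The plan is to prove both tail bounds by a single scheme, treating $y_N$ as a common deterministic reference and controlling step by step how far the stochastic processes $b_N$ and $x_N$ wander from it. Let $z_N$ stand for either $b_N$ or $x_N$. Writing the error at time $k\Delta$ as the telescoping sum $\sum_{m=1}^{k}\big[(z_N-y_N)(m\Delta)-(z_N-y_N)((m-1)\Delta)\big]$, I would split each one-step difference into a \emph{drift discrepancy} and a \emph{fluctuation}. For $x_N$ the one-step increment is $\Delta\sum_j v_j\int_0^\infty f_j(x_N((m-1)\Delta-s))\,d\mu_j(s)+\tfrac{\sqrt\Delta}{\sqrt N}\eta_m$, whereas the $y_N$ increment is the same drift evaluated along the $y_N$ history; subtracting leaves a drift discrepancy $\Delta\sum_j v_j\int_0^\infty[f_j(x_N(\cdot-s))-f_j(y_N(\cdot-s))]\,d\mu_j(s)$ plus the fluctuation $\tfrac{\sqrt\Delta}{\sqrt N}\eta_m$. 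For $b_N$ the decomposition is identical once one invokes Proposition~\ref{dCLE:prop:3} to identify the conditional mean of the birth--death increment with the very same drift up to an $\mathcal{O}(\Delta^2)$ correction; the fluctuation is then the scaled deviation of the actual reaction counts from their conditional means.

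The fluctuation terms are controlled by the per-step tail estimates already established. For $x_N$, the event $\{\|\eta_m\|>\Delta^{-1}\}$ has probability at most $e^{-\Delta^{-2}/2}$ by the standard Gaussian tail, so on its complement the one-step fluctuation is bounded by $\tfrac{\sqrt\Delta}{\sqrt N}\Delta^{-1}=(N\Delta)^{-1/2}$ times a constant of size $\zeta$. For $b_N$, Lemmas~\ref{lemma:A1} and~\ref{lemma:A2} together with Corollary~\ref{c:c_jump_bound} and the stretched-exponential bound~\eqref{supp:eq:stretch_exp} show that a one-step fluctuation of size $\sim\zeta(N\Delta)^{-1/2}$ — which corresponds to roughly $\Delta^{-1}$ standard deviations of the Poisson counts — occurs with probability at most $\Delta^{\Delta^{-1}}$. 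A union bound over the $k\leqs T/\Delta$ steps then yields the prefactors $ke^{-\Delta^{-2}/2}$ and $k\Delta^{\Delta^{-1}}$, exactly the right-hand sides claimed.

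It remains to accumulate the one-step estimates into the tube radius $4k\zeta/\sqrt{N\Delta}$. On the good event (no step has a large fluctuation), set $E_k=\max_{0\leqs m\leqs k}\|(z_N-y_N)(m\Delta)\|$. Since the $f_j$ are $C^2$ with compact support they are globally Lipschitz with some constant $L$, and the drift discrepancy at step $m$ is bounded by $L'\Delta\,E_{m-1}$, where $L'$ collects $L$, $V$ and $M$; because of the delay, the relevant history error is the supremum over the whole window $[(m-1)\Delta-\tau_0,(m-1)\Delta]$, which is why one propagates the running maximum $E_k$ rather than the instantaneous error. This produces the discrete Gronwall recursion $E_k\leqs(1+L'\Delta)E_{k-1}+c\zeta/\sqrt{N\Delta}$, whose solution obeys $E_k\leqs c\zeta\,k\,e^{L'\Delta k}/\sqrt{N\Delta}\leqs c\,e^{L'T}\zeta\,k/\sqrt{N\Delta}$; choosing the numerical constant to absorb $c\,e^{L'T}$ gives the stated radius.

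The main obstacle is precisely this delay-induced, non-Markovian feedback in the drift: unlike the delay-free case, the one-step drift discrepancy depends on the error accumulated across the entire history window rather than at a single preceding time, so the naive one-step recursion does not close. Handling it cleanly requires propagating a supremum-type error functional and checking that the Lipschitz feedback contributes only a bounded multiplicative factor (here folded into the constant $4$). The remaining effort — matching the $\Delta^{-1}$-standard-deviation scale of the fluctuation threshold to the per-step tail bounds, so that the exponents $e^{-\Delta^{-2}/2}$ and $\Delta^{\Delta^{-1}}$ emerge with the correct rates — is careful bookkeeping resting on Lemmas~\ref{lemma:A1}--\ref{lemma:A2} and Proposition~\ref{dCLE:prop:3}.
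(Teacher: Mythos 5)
Your proposal follows essentially the same route as the paper's proof: a one-step decomposition of the error into drift discrepancy plus fluctuation, per-step tail bounds (the Poisson-based bound $\Delta^{\Delta^{-1}}$ for $b_N$ via Lemmas~\ref{lemma:A1}--\ref{lemma:A2}, the Gaussian tail $e^{-\Delta^{-2}/2}$ for $x_N$), a union bound over the at most $T/\Delta$ steps to produce the prefactor $k$, and Lipschitz propagation of the drift discrepancy through the delayed history. The paper likewise starts on $[0,\Delta]$, where $b_N\equiv y_N$ for $t\leq 0$ makes the drift discrepancy vanish, and then propagates forward; your point that the delay forces one to carry a supremum over the whole history window rather than the instantaneous error is the right refinement and is implicit in that forward propagation.

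The genuine problem is in your final accumulation step. The paper's propagation keeps the main term $2k\zeta/\sqrt{N\Delta}$ linear in $k$ and asserts that the Lipschitz feedback contributes only terms carrying extra powers of $\nabla_N\Delta$, i.e.\ a correction of relative size $O(\nabla_N\Delta)$ that vanishes as $N\to\infty$; that is how the explicit constant $4$ survives. Your discrete Gronwall recursion $E_k\leq(1+L'\Delta)E_{k-1}+c\zeta/\sqrt{N\Delta}$ compounds the feedback and yields $E_k\leq c\,e^{L'T}k\zeta/\sqrt{N\Delta}$, and the closing claim that $c\,e^{L'T}$ can be ``absorbed into the numerical constant $4$'' is not valid: $4$ is a fixed number, while $e^{L'T}$ depends on $T$ and on the Lipschitz constant of the propensities and can be arbitrarily large. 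As written you have proved the proposition with a system- and $T$-dependent constant in place of $4$, which is weaker than the literal statement (though it would still suffice for Theorem~\ref{thm:tube}, whose constants $K_1,K_2$ are unspecified). It is fair to note that the paper's own bookkeeping at this point is loose --- its claimed correction $\sum_{j=1}^{k-1}j(\nabla_N\Delta)^{k-j+1}$ does not follow from the recursion it sets up, and a careful accounting gives precisely your exponential factor --- but a proof of the statement as written must either justify the non-compounding accounting or assume $\nabla_N T$ bounded by an absolute constant; simply declaring the factor absorbed is the gap.
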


\begin{proof}
Let $\varphi$ denote the increment to the process $b_N$ in the time interval $[0, \Delta]$. From Lemma~\ref{lemma:A1} it follows that for $N$ large enough
\[
\mathbb{B}_N \left(
\left\|
\varphi - \left[
\sum_{j = 1}^{M} v_j \int_{0}^{\infty} f_j(b_N(-s))~d\mu_j(s)
\right]\Delta
\right\| > \frac{\zeta}{\sqrt{N\Delta}}
\right) \leq \Delta^{\Delta^{-1}}.
\] Let $\delta y_N(0)$ denote the increment to the process $y_N$ on $[0,\Delta].$ Since $y_N = b_N$ for
$t \leq 0$, it follows that 
\[
\mathbb{B}_N \left(
\left\|
b_N(\Delta) - y_N(\Delta)
\right\| > \frac{\zeta}{\sqrt{N\Delta}}
\right) = \mathbb{B}_N \left(
\left\|
\varphi - \delta y_N(0)
\right\| > \frac{\zeta}{\sqrt{N\Delta}}
\right) \leq \Delta^{\Delta^{-1}}.
\] We now bound the error on the interval $[\Delta, 2\Delta].$ 
\begin{equation*}
\begin{split}
\left\|
b_N(2\Delta) - y_N(2\Delta)
\right\| &\leq \left\|
b_N(\Delta) - y_N(\Delta)
\right\| \\
&\quad {}+ \left\|
\sum_{j = 1}^{M} v_j \int_{0}^{\infty} \left( f_j(b_N(\Delta -s)) - f_j(y_N(\Delta -s)) \right)~d\mu_j(s)
\right\| + \frac{2\zeta}{\sqrt{N\Delta}}
\end{split}
\end{equation*} 
except on a set of measure $2\Delta^{\Delta^{-1}}.$ The bound on the first term is $2\zeta/\sqrt{N\Delta}$ and the bound on the middle term is $2\Delta\nabla_N\zeta/\sqrt{N\Delta}$; the total error is smaller than
\[
\frac{4\zeta}{\sqrt{N\Delta}}+ \frac{2\zeta}{\sqrt{N\Delta}} (\Delta\nabla_N).
\] This argument can now be propagated forward. For any finite $k$, except on a set of $\mathbb{B}_N$ measure smaller than $k\Delta^{\Delta^{-1}}$, the distance between $y_N(k\Delta)$ and $b_N(k\Delta)$ is at most
\[
\frac{2k\zeta}{\sqrt{N\Delta}} + \frac{2\zeta}{\sqrt{N\Delta}} \sum_{j = 1}^{k-1} j (\nabla_N \Delta)^{k -j+1}.
\] The summation on the right can be bounded by $3\zeta(k-1)\nabla_N\Delta/\sqrt{N\Delta}(1 - \nabla_N\Delta)$. Since $k$ is at most $T/\Delta$, the contribution of the summation is small relative to the contribution of the first term. Therefore, we can bound the distance between realizations by $4k\zeta/\sqrt{N\Delta}$.

The proof for the process $x_N$ proceeds analogously. For $N$ large enough
\[
\mathbb{X}_N \left(
\left\|
\delta x_N(0) - \left[
\sum_{j = 1}^{M} v_j \int_{0}^{\infty} f_j(x_N(-s))~d\mu_j(s)
\right]\Delta
\right\| > \frac{\zeta}{\sqrt{N\Delta}}
\right) \leq e^{-\Delta^{-2}/2}
\] from where it follows that 
\[
\mathbb{X}_N \left(
\|\delta x_N(0) - \delta y_N(0)\| > \frac{\zeta}{\sqrt{N\Delta}} 
\right)\leq e^{-\Delta^{-2}/2}.
\] This bound can then be propagated as before. 
\end{proof}

\end{document}